\newtheorem{theorem}{Theorem}[section]
\newtheorem{example}[theorem]{Example}
\newtheorem{proposition}[theorem]{Proposition}
\newtheorem{corollary}[theorem]{Corollary}
\newtheorem{definition}[theorem]{Definition}
\newtheorem{recall}[theorem]{Recall}
\newcommand{\mathbox}[1]{\mbox{{\footnotesize \mbox{$ #1 $}}}}
\newcommand{\set}[1]{\left\{{#1}\right\}} 
\newcommand{\setof}[2]{\left\{{#1\:}\left|{\;#2}\right.\right\}}
\newcommand{\lsta}[4]{\mathbox{#1\overset{#2}{\underset{#3}{\longrightarrow}}#4}}
\newcommand{\gsta}[4]{\mathbox{#1\overset{#2}{\underset{#3}{\Longrightarrow}}#4}}
\newcommand{\netfir}[3]{\mbox{$#1[#2\negthinspace\,\rangle\,\negthinspace #3$}}
\definecolor{grisclair}{gray}{0.8}
\definecolor{gristresclair}{gray}{0.97}
\newcommand{\ExpleBox}[1]
{{\leavevmode\unskip\nobreak\hfil\penalty50\hskip.75cm%
    \hbox{} \nobreak\hfil $\Box$~\textit{Exple}~\ref{#1} \parfillskip=0pt
    \finalhyphendemerits=0
    \par
}}
\newcommand{\PropBox}[1]
{{\leavevmode\unskip\nobreak\hfil\penalty50\hskip.75cm%
    \hbox{} \nobreak\hfil $\Box$~\textit{Prop}.~\ref{#1} \parfillskip=0pt
    \finalhyphendemerits=0
    \par
}}
\newcommand{\RecallBox}[1]
{{\leavevmode\unskip\nobreak\hfil\penalty50\hskip.75cm%
    \hbox{} \nobreak\hfil $\Box$~\textit{Recall}~\ref{#1} \parfillskip=0pt
    \finalhyphendemerits=0
    \par
}}
\newcommand{\DefBox}[1]
{{\leavevmode\unskip\nobreak\hfil\penalty50\hskip.75cm%
    \hbox{} \nobreak\hfil $\Box$~\textit{Def}.~\ref{#1} \parfillskip=0pt
    \finalhyphendemerits=0
    \par
}}
\thanks{Inria and Lirima}%
\thanks{Inria}%
\thanks{ENSP and Lirima} 
\thanks{Université Paris-Est and Inria}} 
\begin{document}
\makeRR   
\section{Introduction}\label{sec:introduction}
Traditional Case Management Systems rely on workflow models. The emphasis is put on the orchestration of activities involving humans (the \textit{stakeholders})  
and software systems, in order to achieve some global objective. 
In this context, stress is often put on control and coordination of the tasks required for the realization of a particular service.  
Such systems are usually modeled using centralized and state-based formalisms like automata, Petri nets or statecharts. They can also be directly specified with dedicated 
notations like BPEL~\cite{BPEL07} or BPMN \footnote{\tt www.bpmn.org}. 

A drawback of existing workflow formalisms is that {\em data} exchanged during the processing of a task play a secondary role when 
not simply ignored. 
However, data can be tightly connected with control flows and should not be overlooked. 
Actually, data contained in a request may influence its processing;  conversely different decisions during the treatment of a case may produce distinct output-values.

Similarly, stakeholders are frequently considered as second class citizens in workflow systems: they are modeled as plain resources, 
performing specific tasks for a particular case, like machines in assembly lines.
As a result, workflow systems are ideal to model fixed production schemes in manufactures or organizations, but can be
too rigid to model open architectures where  
the evolving rules and data require more flexibility. 

On the other hand, {\em data-centric workflow systems}, 
proposed by IBM~\cite{NigamC03,Hull08,DamaggioDV12}, 
put stress on the exchanged documents, 
the so-called {\em Business Artifacts}, 
also known as {\em business entities with lifecycles}.
An artifact is a document that conveys all the information  
concerning a particular case from its inception in the  
system until its completion. 
It contains all the relevant information about the entity together with a lifecycle that models its possible evolutions through the business process. 
Several variants presenting the life cycle of an artifact by an automaton, a Petri net~\cite{LohmannW10},  
or logical formulas depicting legal successors of a state~\cite{DamaggioDV12} have been proposed. 
However, even these variants remain state-based centralized models in which stakeholders are second class citizens.  

Recently, Guard-Stage-Milestone (GSM), a declarative model of the lifecycle of artifacts has been 
introduced~\cite{HullDMFGHHHLMNSV11,DamaggioHV13}. 
This model defines {\em Guards}, {\em Stages} and {\em Milestones} to control the enabling, enactment and completion of (possibly hierarchical) activities. 
The GSM lifecycle meta-model has been adopted as a basis of the OMG standard 
{\em Case Management Model and Notation} (CMMN). 
The GSM model allows for dynamic creation of subtasks (the {\em stages}), and handles data attributes. Furthermore, guards and milestones attached 
to stages provide declarative descriptions of tasks inception and termination.
However, interaction with users are modeled as incoming messages from the environment, or as events from low-level (atomic) stages. 
In this way, users do not contribute to the choice of a workflow for a process. The semantics of GSM models is given in terms of global snapshots. 
Events can be handled by all stages as soon as they are produced, and guard of a stage can refer to attributes of distant stages. 
Thus this model is not directly executable on a distributed architecture.

\textit{This paper introduces a distributed and declarative model for Case Management called \textit{Guarded Attribute Grammars} 
(GAG for short), which is both data-centric, user-driven, and provides a convenient way to handle distribution}. 
GAGs are collections of semantic rules that describe how to produce data from inputs provided by the environment. 
They are a variant of attribute grammars~\cite{Knuth68,Paakki95}. 
Their notation is  reminiscent of unification grammars, and is inspired by the work of Deransart and Maluszynski~\cite{DeransartMBook93} 
relating attribute grammars with definite clause programs. 

In this declarative model, the lifecycle of artifacts is left implicit. 
Artifacts under evaluation can be seen as incomplete structured documents, i.e., 
trees with \textit{open nodes} corresponding to parts of the document that remain to be completed. 
Each open node is attached intentional data, i.e., an expression for the piece of information to be substituted to the node. 
The evolution of an artifact is governed by the stakeholder's decisions 
(choosing a particular action amongst those which are enabled at a given moment, inputing data,...), 
and by application of the semantic rules to update artifacts (by refining open nodes). 
Cases reach completion when they do not contain open nodes. 
An artifact is thus a structured document with some active parts. 
This notion of \textit{active documents} is close to the model of Active  
XML introduced by Abiteboul et al.~\cite{AbiteboulBMMW02} 
which consists of semi-structured documents with embedded service calls.

The paper is organized as follows. 
After an informal introduction to our grammatical approach to case management in Section~\ref{sec:GACM} 
we introduce the model of \textit{guarded attribute grammars} that underpins the approach by presenting successively 
its syntax (Section~\ref{sec:syntax}), its behaviour (Section~\ref{sec:semantics}) and by giving some illustrative examples (Section~\ref{sec:examples}). 
The deployment of a guarded attribute grammar on a distributed asynchronous architecture is studied in Section~\ref{sec:distribution}. 
Soundness of guarded attribute grammars is touched upon in Section~\ref{sec:soundness}. 
An assessment of the model and future research directions are given in conclusion.

\section{A Grammatical Approach to Case Management}\label{sec:GACM}
This section introduces a grammatical notation for case management which will be described more formally in the subsequent two sections.

Throughout the paper, the term \textit{case} designates a concrete instance of a given business process. 
We use the editorial process of an academic journal as a running example to illustrate the various notions and notations. 
A case for this example is the editorial processing of a particular article submitted to the journal.

The case is handled by various actors involved in the process, called \textit{stakeholders}, 
namely the editor in chief, an associate editor and some referees.
We associate each case with a document, called an \textit{artifact}, 
that collects all the information related to the case from its inception in the process until its completion. 
When the case is closed this document constitutes a full history of all the decisions that led to its completion.

We interpret a case as a problem to be solved, that can be completed by refining it into sub-tasks using business rules. 
This notion of business rule can be modelled by a \textit{production}
$
 P : s_0 \leftarrow s_1 \cdots s_n
$ 
expressing that task $s_0$ can be reduced to subtasks $s_1$ to $s_n$. 
If several productions with the same left-hand side $s_0$ exist then the choice 
of a particular production corresponds to a decision made by some designated stakeholder.
For instance, there are two possible immediate outcomes for a submitted article: 
either it is validated by the editor in chief 
and it enters the evaluation process of the journal or it is invalidated because its topic or format is not adequate. 
This initial decision can be reflected by the two following productions:
\[
\begin{array}{l@{\;\;:\;\;}l}
\mathrm{validate} &  \mathbf{Proposed\_submission} \leftarrow  \mathbf{Submission}\\
\mathrm{invalidate} & \mathbf{Proposed\_submission} \leftarrow
\end{array}
\]
If $P$ is the unique production having $s_0$ in its left-hand side, then there is no real decision to make and such a rule is 
interpreted as a logical decomposition of the task $s_0$ into substasks $s_1$ to $s_n$. 
Such a production will be automatically triggered without human intervention.

Accordingly, we model an artifact as a tree whose nodes are sorted.
We write $X::s$ to indicate that node $X$ is of sort $s$. 
An artifact is given by a set of equations of the form $X=P(X_1,\ldots,X_n)$, stating that $X::s$ is a node labeled by production 
$P : s \leftarrow s_1 \cdots s_n$ and with successor nodes $X_1::s_1$ to $X_n::s_n$. 
In that case node $X$ is said to be a \textit{closed} node defined by equation $X=P(X_1,\ldots,X_n)$ (we henceforth assume that we do not have two equations 
with the same left-hand side).
A node $X::s$ defined by no equation (i.e. that appears only in the right hand side of an equation) is an \textit{open node}. It corresponds to a pending task $s$.

The lifecycle of an artifact is implicitly given by a set of productions:
\begin{enumerate}
\item The artifact initially associated with a case is reduced to a single open node. 
\item An open node $X$ of sort $s$ can be {\em refined} by choosing 
a production\\ $P: s \leftarrow s_1\ldots s_n$ that fits its sort.\\
\parbox[c]{0.50\linewidth}{
The open node $X$ becomes a closed node $X=P(X_1,\ldots,X_n)$ under the decision of applying production $P$ to it. 
In doing so the task $s$ associated with $X$ is replaced by $n$ subtasks $s_1$ to $s_n$ and new open nodes 
$X_1::s_1$ to $X_n::s_n$ are created accordingly.
}\hfill
\parbox[c]{0.45\linewidth}{\vspace{-2ex}
\begin{center}
\begin{small}
\begin{tikzpicture}[>=stealth',shorten >=1pt, auto,on grid,>=stealth',auto,bend angle=30]

\begin{scope}[scale=.8]
\node[state,draw=black!25,thick,fill=black!10,label=80:$s$]  (x) at (0,2)   {\small $?$};
\node[state,draw=black!25,thick,fill=black!10,label=80:$s$]  (x0) at (4,2)   {\small $P$};
\node[state,draw=black!25,thick,fill=black!10,label=110:$s_1$]  (x1) at (2.5,0)  {\small $?$ };
\node[state,draw=black!25,thick,fill=black!10,label=80:$s_n$]  (x2) at (5.5,0) {\small $?$};
\draw[->,thick] (1,2) -- (3,2);

\path[-,color=black!40,very thick] (x0)  edge (x1)
                                         edge (x2);
                                         
\end{scope}
\end{tikzpicture}
\end{small}
\end{center}
}
\item The case has reached completion when its associated artifact is closed, i.e. it no longer contains open nodes.
\end{enumerate}

However, plain context-free grammars do not model the interactions and data exchanged between the various tasks associated with open nodes. 
To overcome this problem, we attach additional information to open nodes using {\em attributes}. 
Each sort $s\in S$ comes equipped with a set of {\em inherited} attributes and a set of {\em synthesized} attributes.  
Values of attributes are given by {\em terms} over a ranked alphabet.
Recall that such a term is either a variable or an expression of the form $c(t_1,\ldots,t_n)$ where $c$ is a symbol of rank $n$, 
and $t_1, \dots , t_n$ are terms. In particular a constant $c$, i.e. a symbol of rank $0$, will be identified with the term $c(\,)$.
We will denote by $var(t)$ the set of variables used in term $t$.

\begin{definition}[Forms]
\label{def:form}
A {\bf form} of sort $s$ is an expression $$F=s(t_1,\ldots,t_n)\langle u_1,\ldots,u_m\rangle$$ where 
$t_1,\ldots,t_n$ (respectively $u_1,\ldots,u_m$) are terms over a ranked alphabet ~---the alphabet of attribute's values---~ 
and a set of variables $\mathit{var}(F)$.  
Terms $t_1,\ldots,t_n$ give the values of the {\bf inherited attributes} and  
$u_1,\ldots,u_m$ the values of the {\bf synthesized attributes}) attached to form $F$. 
\DefBox{def:form}
\end{definition}

From now on, we consider productions where sorts are replaced by forms of the corresponding sorts. 
More precisely, a production is of the form
\begin{equation}
 \label{eq:production}
 \begin{array}{lcl}
s_0(p_1,\ldots,p_n)\langle u_1,\ldots, u_m\rangle &\leftarrow& 
s_1(t_1^{(1)},\ldots,t_{n_1}^{(1)})\langle y_1^{(1)},\ldots, y_{m_1}^{(1)}\rangle\\ & & \cdots \\
 & & s_k(t_1^{(k)},\ldots,t_{n_k}^{(k)})\langle y_1^{(k)},\ldots, y_{m_k}^{(k)}\rangle
\end{array}
\end{equation}
where the $p_i$'s, the $u_j$'s, and the $t_j^{(\ell)}$'s are terms and the $y_j^{(\ell)}$'s are variables. 
The forms in the right-hand side of a production are {\em service calls}, namely they are 
forms $F=s(t_1,\ldots,t_n)\langle y_1,\ldots,y_{m}\rangle$ 
where the synthesized positions are (distinct) variables $ y_1,\ldots,y_{m}$  
(i.e.,\ they are not instantiated). 
The rationale is that we invoke a service by filling in the inherited 
positions of the form (the entries) and by indicating the variables that expect to receive the results returned by the service (the subscriptions). 

Any open node is now attached to a service call.
The corresponding service is supposed to {\em (i)} construct the tree that will refine the open node and 
{\em (ii)} compute  the values of the  synthesized attributes (i.e., it should return the subscribed values). 
A service is enacted by applying productions. 
More precisely, a production such as the one given in formula~(\ref{eq:production}) 
can apply in an open node $X$  when its left-hand side matches with the service call $s_0(d_1,\ldots,d_n)\langle y_1,\ldots, y_m\rangle$
attached to node $X$. 
For that purpose the terms $p_i$'s are used as patterns that should match the corresponding data $d_i$'s.
When the production applies, new open nodes are created and they are respectively associated with the forms 
(service calls) in the right-hand side of the production. The values of  $u_j$'s are then returned to the corresponding 
variables $y_j$'s that had subscribed to these values. For instance applying production
\[
 P : s_0(a(x_1,x_2))\langle b(y'_1),y'_2\rangle \leftarrow 
 s_1(c(x_1))\langle y'_1\rangle \;\; s_2(x_2,y'_1)\langle y'_2\rangle
\]
 \parbox[c]{0.60\linewidth}{
to a node associated with service call $s_0(a(t_1,t_2))\langle y_1,y_2\rangle$ 
gives rise to the substitution $x_1=t_1$ and $x_2=t_2$. The two newly-created open nodes are respectively 
associated with the service calls $ s_1(c(t_1))\langle y'_1\rangle$ and 
$s_2(t_2,y'_1)\langle y'_2\rangle$ and the values $b(y'_1)$ and $y'_2$ are substituted 
to the variables $y_1$ and $y_2$ respectively. 
}\hfill
\parbox[c]{0.38\linewidth}{\vspace{-2ex}
\begin{center}
\begin{tikzpicture}[>=stealth',shorten >=1pt, auto,on grid,>=stealth',auto,bend angle=20,
                    every place/.style={minimum size=4.5mm,thick,draw=black!60,inner sep=.2pt},
                    every transition/.style={minimum width=5mm,minimum height=5mm,inner sep=2pt,draw=black!60,thick,fill=white},
                    state/.style={circle,minimum size=5mm,inner sep=1pt,draw=black!25,thick,fill=black!10}]

\begin{scope}[scale=.6]
\node[state,label=90:{\scriptsize $s_0$}]  (nx) at (0,2)   {\scriptsize $\mathrm{P}$};
\node[state,label=90:{\scriptsize $s_1$}]  (nx1) at (-1.8,-.4)  {\tiny $?$};
\node[state,label=90:{\scriptsize $s_2$}]  (nx2) at (1.8,-.4) {\tiny $?$};
\path[-,color=black!40,very thick] (nx)  edge (nx1)
                                         edge (nx2);

\node[place,draw=green!45,thick,fill=green!10]  (a) at (-1.7,2.8)  {\scriptsize $a$};

\node[place,draw=green!45,thick,fill=green!10,,label=270:{\scriptsize $x_2$}]  (y) at (-1.2,1.8)  {\tiny $?$};
\node[place,draw=green!45,thick,fill=green!10,label=270:{\scriptsize $x_1$}]  (x) at (-2.2,1.8)  {\tiny $?$};
\path[-,thick,color=black!40] (a)  edge (x);
\path[-,thick,color=black!40] (a)  edge (y);

\node[place,draw=red!45,thick,fill=red!10]  (syn1) at (1.2,2.8) {\scriptsize $b$};
\node[place,draw=red!45,thick,fill=red!10]  (z) at (1.2,1.8)  {\scriptsize $y'_1$};
\node[place,draw=red!45,thick,fill=red!10]  (syn2) at (2.2,1.8)  {\scriptsize $y'_2$};
\path[-,thick,color=black!40] (syn1)  edge (z);

\node[place,draw=green!45,thick,fill=green!10]  (inh1) at (-2.5,-1.1)  {\scriptsize $c$};
\node[place,draw=green!45,thick,fill=green!10]  (xx) at (-2.5,-2.1)  {\scriptsize $x_1$};
\node[place,draw=red!45,thick,fill=red!10,label=below:{\scriptsize $y'_1$}]  (zz) at (-1.1,-1.1) {\tiny $?$};
\path[-,thick,color=black!40] (inh1)  edge (xx);

\node[place,draw=green!45,thick,fill=green!10]  (inh22) at (1.1,-1.1)  {\scriptsize $y'_1$};
\node[place,draw=green!45,thick,fill=green!10]  (inh21) at (0.3,-1.1)  {\scriptsize $x_2$};
\node[place,draw=red!45,thick,fill=red!10,label=below:{\scriptsize $y'_2$}]  (uu) at (2.5,-1.1) {\tiny $?$};
               
\draw[->,color=black!40,thick] (x) edge[bend angle=30,bend right] (xx.west);
\draw[->,color=black!40,thick] (y) edge[bend angle=10,bend left] (inh21.north);
\draw[->,color=black!40,thick] (uu) edge[bend angle=10,bend right] (syn2);
\path[->,color=black!40,thick] (zz.80) edge[bend angle=60,bend left] (inh22.north);
\path[->,color=black!40,thick] (zz.85) edge[bend angle=10,bend left] (z.250);

\end{scope}
\end{tikzpicture}
\end{center}}

The precise definitions  are given in the next section. 
For the moment, let us illustrate the notations on our running example. 
A stakeholder has a specific \textbf{role} in the editorial process: he can be an author, the editor in chief, 
an associate editor or a referee. Each role is associated with 
a set of services and a set of productions explaining how each service is provided. 
For instance an associate editor 
provides the service $\mathbf{Submission}(\mathit{article})\langle \mathit{decision}\rangle$ 
consisting in returning an editorial decision about an article submitted to the journal. 
We emphasize the fact that production 
$\mathrm{MakeDecision}(\mathit{decision}) : \mathbf{Decide}(\mathit{report}_1,\mathit{report}_2) \langle \mathit{decision}\rangle \leftarrow$
has a parameter $decision$, that is used to enter new data in the case. Parameters are a convenient way to avoid specifing a production for each 
allowed parameter. A parametric production can be equivalently replaced by several non-parametric productions when its parameters range over a finite set of values. 
The corresponding productions are listed in 
Table~\ref{Table:assEditor}. 
\begin{table}[htb]
\caption{Acting as an associate Editor}
\mbox{}\vspace{-.5cm}\mbox{}
\[
\begin{array}{rcl}
\mathrm{DecideSubmission} &:& \mathbf{Submission}(\mathit{article}) \langle \mathit{decision}\rangle\leftarrow \\ && \quad
\mathbf{Evaluate}(\mathit{article}) \langle \mathit{report}_1\rangle \\ && \quad
\mathbf{Evaluate}(\mathit{article}) \langle \mathit{report}_2\rangle \\ && \quad
\mathbf{Decide}(\mathit{report}_1,\mathit{report}_2) \langle \mathit{decision}\rangle \\ 
\mathrm{MakeDecision}(\mathit{decision}) 
&:& \mathbf{Decide}(\mathit{report}_1,\mathit{report}_2) \langle \mathit{decision}\rangle \leftarrow\\
\mathrm{AskReview}(\mathit{reviewer}) &:& \mathbf{Evaluate}(\mathit{article}) \langle \mathit{report}\rangle\leftarrow \\ && \quad
\mathbf{WaitReport}(\mathit{answer},\mathit{article})\langle \mathit{report}\rangle\\ && \quad
\mathtt{Call}(\mathit{reviewer},\mathbf{ToReview}(\mathit{article})\langle \mathit{answer}\rangle)\\
\mathrm{CaseNo}\langle \mathit{msg}\rangle &:&
\mathbf{WaitReport}(\mathrm{No}(\mathit{msg}),\mathit{article})\langle \mathit{report}\rangle \leftarrow \\ && \quad
\mathbf{Evaluate}(\mathit{article})\langle \mathit{report}\rangle\\
\mathrm{CaseYes}\langle \mathit{msg}\rangle &:&
\mathbf{WaitReport}(\mathrm{Yes}(\mathit{msg},\mathit{report}),\mathit{article})\langle \mathit{report}\rangle \leftarrow \\ 
\end{array}
\]
\label{Table:assEditor}
\end{table}

The first two productions mean that an associate editor makes an editorial decision about a submitted paper on the basis of the evaluation reports 
produced by two different referees.  
He can ask a report from a reviewer through an invocation of the external service $\mathbf{ToReview}(\mathit{article})\langle \mathit{answer}\rangle$.
The productions that govern the actions of a reviewer are given in Table~\ref{Table:reviewer}. 

\begin{table}[htb]
\caption{Acting as a reviewer}
\mbox{}\vspace{-.7cm}\mbox{}
\[
\begin{array}{rcl}
\mathrm{Decline}(\mathit{msg}) &:& 
\mathbf{ToReview}(\mathit{article})\langle \mathrm{No}(\mathit{msg})\rangle \leftarrow \\
\mathrm{Accept}(\mathit{msg}) &:&
\mathbf{ToReview}(\mathit{article})\langle \mathrm{Yes}(\mathit{msg},\mathit{report})\rangle \leftarrow 
\mathbf{Review}(\mathit{article})\langle \mathit{report}\rangle\\
\mathrm{MakeReview}(\mathit{report})&:& 
\mathbf{Review}(\mathit{article})\langle \mathit{report}\rangle \leftarrow\\
\end{array}
\]
\label{Table:reviewer}
\end{table}

One can group the productions of Table~\ref{Table:assEditor} and Table~\ref{Table:reviewer} using an additional parameter $\mathit{reviewer}$ 
to make as many disjoint copies of the specification given in  Table~\ref{Table:reviewer} as there are individuals playing the role of 
a referee. The resulting set of productions (where call to external services have been eliminated) is given in Table~\ref{Table:editorialProcess}. 
\begin{table}[htb]
\caption{Making a decision on a submitted paper}
\mbox{}\vspace{-.7cm}\mbox{}
\[
\begin{array}{rcl}
\mathrm{DecideSubmission} &:& \mathbf{Submission}(\mathit{article}) \langle \mathit{decision}\rangle\leftarrow \\ && \quad
\mathbf{Evaluate}(\mathit{article}) \langle \mathit{report}_1\rangle \\ && \quad
\mathbf{Evaluate}(\mathit{article}) \langle \mathit{report}_2\rangle \\ && \quad
\mathbf{Decide}(\mathit{report}_1,\mathit{report}_2) \langle \mathit{decision}\rangle \\ 
\mathrm{MakeDecision}(\mathit{decision}) 
&:& \mathbf{Decide}(\mathit{report}_1,\mathit{report}_2) \langle \mathit{decision}\rangle \leftarrow\\
\mathrm{AskReview}(\mathit{reviewer}) &:& \mathbf{Evaluate}(\mathit{article}) \langle \mathit{report}\rangle\leftarrow \\ && \quad
\mathbf{WaitReport}(\mathit{answer},\mathit{article})\langle \mathit{report}\rangle\\ && \quad
\mathbf{ToReview}(\mathit{reviewer},\mathit{article})\langle \mathit{answer}\rangle\\
\mathrm{Decline}(\mathit{msg})\langle \mathit{reviewer}\rangle &:& 
\mathbf{ToReview}(\mathit{reviewer},\mathit{article})\langle \mathrm{No}(\mathit{msg})\rangle \leftarrow \\
\mathrm{Accept}(\mathit{msg})\langle \mathit{reviewer}\rangle &:&
\mathbf{ToReview}(\mathit{reviewer},\mathit{article})\langle \mathrm{Yes}(\mathit{msg},\mathit{report})\rangle \leftarrow \\ && \quad
\mathbf{Review}(\mathit{reviewer},\mathit{article})\langle \mathit{report}\rangle\\
\mathrm{MakeReview}(\mathit{report})\langle \mathit{reviewer}\rangle&:& 
\mathbf{Review}(\mathit{reviewer},\mathit{article})\langle \mathit{report}\rangle \leftarrow\\
\mathrm{CaseNo}\langle \mathit{msg}\rangle &:&
\mathbf{WaitReport}(\mathrm{No}(\mathit{msg}),\mathit{article})\langle \mathit{report}\rangle \leftarrow \\ && \quad
\mathbf{Evaluate}(\mathit{article})\langle \mathit{report}\rangle\\
\mathrm{CaseYes}\langle \mathit{msg}\rangle &:&
\mathbf{WaitReport}(\mathrm{Yes}(\mathit{msg},\mathit{report}),\mathit{article})\langle \mathit{report}\rangle \leftarrow \\ 
\end{array}
\]
\label{Table:editorialProcess}
\end{table}
Similarly one has as many instances of the productions in Table~\ref{Table:assEditor} as there are associate editors in the editorial board. 
In the complete (flat) specification one should 
therefore add an additional parameter $\mathit{associateEditor}$ to distinguish between all associate editors. 
If the specification is large and contains many different roles the resulting global grammar can be quite complex.
Yet, it is still possible to build an equivalent monolithic grammar without external service calls. 

The above specification uses production schemes rather than plain productions. 
Therefore the actual productions of the grammar are instances of these productions schemes where specific values are substituted to the parameters.
Replacing all parameters by their possible values to obtain plain productions in a systematic way results in a {\em guarded attribute grammar} 
(defined in Section~\ref{sec:syntax}) with an infinite set of productions.
However, at least in the above example, the parameters of the productions correspond either to a specific role in the process or to some kind of data 
(a message, a report, a decision) whose precise value has no impact on the behavior  of the system. 
Therefore one can abstract this specification by identifying all individuals 
playing the same role and by representing each type of data by a corresponding constant so that one can obtain 
a finite guarded attribute grammar with the same behavior.

\section{The Syntax of Guarded Attribute Grammars}\label{sec:syntax}
Attribute grammars, introduced by Donald Knuth in the late sixties \cite{Knuth68}, 
have been instrumental in the development of syntax-directed transformations and compiler design. 
More recently this model has been revived for the specification of structured document's manipulations 
mainly in the context of web-based applications.
The expression \textit{grammareware}  has been coined in \cite{KlintLV05} to qualify the tools for the design 
and customization of grammars and grammar-dependent softwares. 
One such interesting tool is the UUAG system developped by Swierstra and his group.  
They relied on  purely functional implementations of  attribute grammars 
\cite{Johnsson87,SaraivaSK00,Backhouse02} to 
build a domain specific languages (DSL) as a set of functional combinators derived from the 
semantic rules of an attribute grammar \cite{SwierstraAS98,SaraivaSK00,SaraivaS03}. 
We intend to adapt this construction to the model of guarded attribute grammars introduced in this paper.

An Attribute grammar is obtained from an underlying grammar by associating each sort $s$ with a set $\mathit{Att}(s)$ of {\em attributes} 
~---which henceforth should exist for each node of the given sort---~ and by associating each production $P:s\leftarrow s_1\ldots s_n$ 
with semantic rules describing the functional dependencies between the attributes of a node labelled $P$ (hence of sort $s$) and 
the attributes of its successor nodes (of respective sorts $s_1$ to $s_n$). 

We use a non-standard notation for attribute grammars, inspired from \cite{DeransartM85,DeransartMBook93}. 
Let us introduce this notation on an example before proceeding to the formal definitions. 

\begin{example}[Flattening of a binary tree]\label{exple:flattening}
Our first illustration is the classical example of the attribute grammar that computes the flattening of a binary tree, 
i.e.,\, the sequence of the leaves  read from left to right. The semantic rules are usually presented as shown in
Table~\ref{table:flattening}.
\begin{table}[htb]
\caption{Flattening of a binary tree}
\begin{tabular}{l}
\parbox[c]{0.40\linewidth}{
\centering
\begin{tikzpicture}[>=stealth',shorten >=1pt, auto,on grid,>=stealth',auto,bend angle=80,
                    every place/.style={minimum size=4.5mm,thick,draw=black!60,inner sep=.2pt},
                    every transition/.style={minimum width=5mm,minimum height=5mm,inner sep=2pt,draw=black!60,thick,fill=white},
                    state/.style={circle,minimum size=6mm,inner sep=1pt,draw=black!25,thick,fill=black!10}]

\begin{scope}[scale=.6]
\node[state,label=80:{\scriptsize $X::\mathit{root}$}]  (x) at (0,1.5)   {\tiny $\mathrm{Root}$};
\node[state]  (x1) at (0,0)  {\tiny $?$};
\path[-,color=black!40,very thick] (x)  edge (x1);

\node[place,draw=red!45,thick,fill=red!10]  (syn) at (.8,.7) {\tiny $x$};
\path[-] (x)  edge (syn);
               
\node[place,draw=green!45,thick,fill=green!10]  (inh1) at (-.80,-.80)  {\tiny $\mathrm{Nil}$};
\node[place,draw=red!45,thick,fill=red!10,label=300:{\tiny $x$}]  (syn1) at (.80,-.80) {\tiny $?$};
\path[-] (x1)  edge (inh1)
               edge (syn1);

\draw[->,color=black!40,thick] (syn1)--(syn);

\end{scope}
\end{tikzpicture}
}\hfill 
\parbox[c]{0.5\linewidth}{
$
\begin{array}{l}
 \mathrm{Root} : \langle X::\mathit{root}\rangle \;\leftarrow \;\langle X_1::\mathit{bin}\rangle\\
 \qquad\qquad
 \begin{array}{l@{\quad}l@{\,=\;}l}
 \mathbf{where} & X\cdot s & X_1\cdot s\\
               & X_1\cdot h & \mathrm{Nil}
 \end{array}
\end{array}
$}
\\ \hline
\parbox[c]{0.40\linewidth}{
\centering
\begin{tikzpicture}[>=stealth',shorten >=1pt, auto,on grid,>=stealth',auto,bend angle=80,
                    every place/.style={minimum size=4.5mm,thick,draw=black!60,inner sep=.2pt},
                    every transition/.style={minimum width=5mm,minimum height=5mm,inner sep=2pt,draw=black!60,thick,fill=white},
                    state/.style={circle,minimum size=6mm,inner sep=1pt,draw=black!25,thick,fill=black!10}]

\begin{scope}[scale=.6]
\node[state,label=80:{\scriptsize $X::\mathit{bin}$}]  (x) at (0,2)   {\tiny $\mathrm{Fork}$};
\node[state]  (x1) at (-1.5,0)  {\tiny $?$};
\node[state]  (x2) at (1.5,0) {\tiny $?$};
\path[-,color=black!40,very thick] (x)  edge (x1)
                                        edge (x2);

\node[place,draw=green!45,thick,fill=green!10,label=150:{\tiny $x$}]  (inh) at (-1.2,1.8)  {\tiny $?$};
\node[place,draw=red!45,thick,fill=red!10]  (syn) at (1.2,1.8) {\tiny $y$};
\path[-] (x)  edge (inh)
              edge (syn);
               
\node[place,draw=green!45,thick,fill=green!10]  (inh1) at (-2.25,-.75)  {\tiny $z$};
\node[place,draw=red!45,thick,fill=red!10,label=300:{\tiny $y$}]  (syn1) at (-.75,-.75) {\tiny $?$};
\path[-] (x1)  edge (inh1)
               edge (syn1);

\node[place,draw=green!45,thick,fill=green!10]  (inh2) at (.75,-.75)  {\tiny $x$};
\node[place,draw=red!45,thick,fill=red!10,label=300:{\tiny $z$}]  (syn2) at (2.25,-.75) {\tiny $?$};
\path[-] (x2)  edge (inh2)
               edge (syn2);
               
\draw[->,color=black!40,thick] (syn1)--(syn);
\draw[->,color=black!40,thick] (inh)--(inh2);
\path[->,color=black!40,thick] (syn2.80) edge[bend right] (inh1.100);
\end{scope}
\end{tikzpicture}
}\hfill 
\parbox[c]{0.5\linewidth}{
$
\begin{array}{l}
 \mathrm{Fork} : \langle X::\mathit{bin}\rangle \;\leftarrow \;\langle X_1::\mathit{bin}\rangle\;\;\langle X_2::bin\rangle\\
 \qquad\qquad
 \begin{array}{l@{\quad}l@{\,=\;}l}
\mathbf{where} & X\cdot s & X_1\cdot s\\
               & X_1\cdot h & X_2\cdot s\\
               & X_2\cdot h & X\cdot h
\end{array}
\end{array}
$}
\\ \hline
\parbox[c]{0.40\linewidth}{
\centering
\begin{tikzpicture}[>=stealth',shorten >=1pt, auto,on grid,>=stealth',auto,bend angle=30,
                    every place/.style={minimum size=4.5mm,thick,draw=black!60,inner sep=.2pt},
                    every transition/.style={minimum width=5mm,minimum height=5mm,inner sep=2pt,draw=black!60,thick,fill=white},]
                    state/.style={circle,minimum size=6mm,inner sep=1pt,draw=black!40,very thick,fill=black!10}]

\begin{scope}[scale=.8]
\node[state,draw=black!25,thick,fill=black!10,label=80:{\scriptsize $X::\mathit{bin}$}]  (x) at (0,2)   {\tiny $\mathrm{Leaf}_a$};
                                         
\node[place,draw=green!45,thick,fill=green!10,label=150:{\tiny $x$}]  (inh) at (-.6,1)  {\tiny $?$};
\node[place,draw=red!45,thick,fill=red!10,inner sep=.2pt]  (syn) at (.6,1)   {\tiny $\mathrm{Cons}_a$};
\node[place,draw=red!45,thick,fill=red!10,inner sep=.2pt]  (syn1) at (.6,.2)   {\tiny $x$};
\path[-] (x)  edge (inh)
              edge (syn);
\path[-,thick,color=black!40]  (syn) edge (syn1);           
\path[->,thick,color=black!40] (inh) edge[bend right] (syn1);           

\end{scope}
\end{tikzpicture}
}\hfill 
\parbox[c]{0.50\linewidth}{
$
\begin{array}{l}
 \mathrm{Leaf}_a : \langle X::bin\rangle \;\leftarrow \\
 \qquad\qquad
 \begin{array}{l@{\quad}ll}
\mathbf{where} & X\cdot s\,=\; \mathrm{Cons}_a (X\cdot h)
\end{array}
\end{array}
$}
\end{tabular}
\label{table:flattening}
\end{table}
The sort $\mathit{bin}$ of binary trees has two attributes: the inherited attribute $h$ contains an accumulating parameter and 
the synthesized attribute $s$ eventually contains the list of leaves of the tree appended to the accumulating parameter. 
Which we may write as $t\cdot s= \mathit{flatten}(t)+\!\!+ t\cdot h$, i.e., $t\cdot s = \mathit{flat}(t,t\cdot h)$ where 
$\mathit{flat}(t,h)=\mathit{flatten}(t)+\!\!+ h$. The semantics rules stem from the identities:
\[
 \begin{array}{r@{=}l}
  \mathit{flatten}(t) & \mathit{flat}(t,Nil)\\
  \mathit{flat}(\mathrm{Fork}(t_1,t_2),h) & \mathit{flat}(t_1,\mathit{flat}(t_2,h))\\
  \mathit{flat}(\mathrm{Leaf}_a,h) & \mathrm{Cons}_a(h)
 \end{array}
\]
We present the semantics rules of Table~\ref{table:flattening} using the following syntax:
\[
 \begin{array}{l@{\;\;:\;\;}r@{\;\leftarrow\;}l}
  \mathrm{Root} & \mathit{root}()\langle x\rangle &\mathit{bin}(Nil)\langle x\rangle\\
  \mathrm{Fork} & \mathit{bin}(x)\langle y\rangle &\mathit{bin}(z)\langle y\rangle\;\mathit{bin}(x)\langle z\rangle\\
  \mathrm{Leaf}_a & \mathit{bin}(x)\langle \mathrm{Cons}_a(x)\rangle &
 \end{array}
\]
The \textit{syntactic categories} of the grammar, also called its \textit{sorts}, namely \textit{root} and \textit{bin} 
are associated with their inherited attributes (given as a list of arguments: $(t_1,\ldots,t_n)$) and 
their synthesized attributes (the co-arguments:$\langle u_1,\ldots,u_m\rangle$). 
A variable $x$ is an \textit{input variable}, denoted as $x^?$, if it appears in an inherited attribute of the left-hand side 
or in a synthesized attribute of the right-hand side. It corresponds to a piece of information stemming 
respectively from the context of the node or from the subtree rooted at the corresponding successor node. These variables should be 
pairwise distinct. Symmetrically a variable is an \textit{output variable}, denoted as $x^!$, if it appears in a synthesized 
attribute of the left-hand side or in an inherited attribute of the right-hand side. It corresponds to values computed by the 
semantic rules and send respectively to the context of the node or the subtree rooted at the corresponding successor node.
Indeed, if we annotate the occurrences of variables with their polarity (input or output) one obtains:
\[
 \begin{array}{l@{\;\;:\;\;}r@{\;\leftarrow\;}l}
  \mathrm{Root} & \mathit{root}()\langle x^{!}\rangle &\mathit{bin}(Nil)\langle x^{?}\rangle\\
  \mathrm{Fork} & \mathit{bin}(x^?)\langle y^!\rangle &\mathit{bin}(z^!)\langle y^?\rangle\;\mathit{bin}(x^!)\langle z^?\rangle\\
  \mathrm{Leaf}_a & \mathit{bin}(x^?)\langle \mathrm{Cons}_a(x^!)\rangle &
 \end{array}
\]
And if we draw an arrow from the (unique) occurrence of $x^?$ to the (various) occurrences of $x^!$ for each variable $x$ 
to witness the data dependencies then the above rules correspond precisely to the three figures shown on the left-hand side of 
Table~\ref{table:flattening}. 
\ExpleBox{exple:flattening}
\end{example}

Guarded attribute grammars extend the traditional model of attribute grammars by allowing 
patterns rather that plain variables (as it was the case in the above example) to represent 
the inherited attributes in the left-hand side of a production.  
Patterns allow the semantic rules to process by case analysis 
based on the shape of some of the inherited attributes, and in this way to handle the interplay between the data (contained in the inherited attributes) 
and the control (the enabling of productions). 

\begin{definition}[Guarded Attribute Grammars]
\label{def:GAG}
Given a set of sorts $S$ with fixed inherited and synthesized attributes. 
A \textbf{guarded attribute grammar} is a set of productions 
$
 P:F_0\;\leftarrow\; F_1 \cdots F_k
$ 
where the $F_i::s_i$ are forms.  
The inherited attributes of left-hand side $F_0$ are called the \textbf{patterns} of the production. 
The values of synthesized attributes in the right-hand side are variables.
These occurrence of variables together with the variables occurring in the patterns are called the {\bf input occurrences} of variables.
We assume that each variable has {\bf at most one input occurrence}.
\DefBox{def:GAG}
\end{definition}

The well-formedness conditions of GAGs express that every output is defined in terms of the inputs. 
We will often refer to this correspondences as the \textbf{semantic rules}.
More precisely, the inputs are associated with (distinct) variables and the value of each output is given by a term using these variables. 

Each variable can have several occurrences. 
First it should appear once as an input and it may also appear in several occurrences within some output term.
The corresponding occurrence is respectively said to be  in an {\em input} or in an {\em output position}. 
One can define the following transformation on productions whose effect is to annotate each occurrence of a variable 
so that $x^?$ (respectively $x^!$) stands for an occurrence of $x$ in an input position (resp. in an output position). 
\[
 \begin{array}{r@{\;=\;}l}
  !(F_0\leftarrow F_1 \cdots F_k) & ?(F_0) \leftarrow !(F_1) \cdots !(F_k)\\
  ?(s(t_1,\ldots t_n)\langle u_1,\ldots u_m\rangle) & s(?(t_1),\ldots ?(t_n))\langle !(u_1),\ldots !(u_m)\rangle\\
  !(s(t_1,\ldots t_n)\langle u_1,\ldots u_m\rangle) & s(!(t_1),\ldots !(t_n))\langle ?(u_1),\ldots ?(u_m)\rangle\\
  ?(c(t_1,\ldots t_n)) & c(?(t_1),\ldots ?(t_n))\\
  !(c(t_1,\ldots t_n)) & c(!(t_1),\ldots !(t_n))\\
  ?(x) & x^{?} \\!(x) & x^{!}
 \end{array}
\]
The conditions stated in Definition~\ref{def:GAG} say that in the 
labelled version of a production each variable occurs at most once in an input position, i.e.,\ that 
$\set{?(F_0), !(F_1), \ldots, !(F_k)}$ is an admissible labelling of the set of forms in $P$ 
according to the following definition. 

\begin{definition}[Link Graph]\label{def:DG}
A labelling in $\set{?,!}$ of the variables $\mathit{var}(\mathcal{F})$ of a set of forms $\mathcal{F}$ 
is \textbf{admissible} if the labelled version of a form $F\in\mathcal{F}$ is given by either $!F$ or 
$?F$ and each variable has at most one occurrence labelled with $?$. 
The occurrence $x^?$ identifies the place where the value of variable $x$ is defined and the occurrences 
of $x^!$ identify the places where this value is used. 
The \textbf{link graph} associated with an admissible labelling of a set of forms $\mathcal{F}$ is  
the directed graph whose vertices are the occurrences 
of variables with an arc from $v_1$ to $v_2$ if these vertices are occurrences of a same variable $x$, labelled $?$ in 
$v_1$ and $!$ in $v_2$. This arc, depicted as follows, \\
\begin{center}
\begin{tikzpicture}[>=stealth',shorten >=1pt, auto,on grid,>=stealth',auto,bend angle=30,
                    every place/.style={minimum size=4.5mm,thick,draw=black!60,inner sep=.2pt}]

\begin{scope}[scale=.8]
\node[place,draw=black!35,thick,fill=black!5,label=150:{\small $x$}]  (source) at (-1,0)  {\small $?$};
\node[place,draw=black!35,thick,fill=black!5]  (target) at (1,0)  {\small $x$};
\path[->,thick,color=black!40] (source) edge (target);  
\end{scope}
\end{tikzpicture}
\end{center}
means that the value produced in the \textbf{source vertex} $v_1$ should be forwarded to the \textbf{target vertex} $v_2$. 
Such an arc is called a \textbf{data link}. 
\DefBox{def:DG}
\end{definition}

\section{The Behaviour of Guarded Attribute Grammars}\label{sec:semantics}
Attribute grammars are applied to input abstract syntax trees.
These trees are usually produced by some parsing algorithm during a previous stage. 
The semantic rules are then used to decorate the node of the input tree by attribute values.
In our setting the generation of the tree and its evaluation using the semantic rules are 
intertwined since the input tree represents an artifact under construction.
An artifact is thus an incomplete abstract syntax tree which contains closed and open nodes.
A closed node is labelled by the production that was used to create it. 
An open node is associated with a form that contains all the needed information for its further refinements. 
The information attached to an open node consists of the sort of the node 
and the current value of its attributes. 
The synthesized attributes of an open node are undefined and 
are thus associated with variables.

\begin{definition}[Configuration of a Guarded Attribute grammar]\label{def:configuration}
A {\bf configuration} $\Gamma$ of a guarded attribute grammar is an $S$-sorted set of nodes $X\in\mathit{nodes}(\Gamma)$ 
each of which is associated with a defining equation in one of the following form where $\mathit{var}(\Gamma)$ 
is a set of variables associated with $\Gamma$\vspace{-1mm}:
\begin{description}
 \item[Closed node:] $X=P(X_1,\ldots,X_k)$ where $P:s\leftarrow s_1\ldots s_k$ is a production of the underlying 
 grammar and $X::s$, and $X_i::s_i$ for $1\leq i\leq k$. Production $P$ is the {\bf label} of node $X$ and nodes $X_1$ to 
 $X_n$ are its {\bf successor nodes}. 
 \item[Open node:] $X=s(t_1,\ldots,t_n)\langle x_1,\ldots,x_m\rangle$ where $X$ is of sort $s$ and
 $t_1,\ldots,t_k$ are terms with variables in $\mathit{var}(\Gamma)$ that represents the values 
 of the inherited attributes of $X$, and $x_1,\ldots,x_m$ are variables in $\mathit{var}(\Gamma)$ 
 associated with its synthesized \vspace{-1mm}attributes.
\end{description}
Each variable in  $\mathit{var}(\Gamma)$ occurs at most once in a synthesized position. 
Otherwise stated $!\Gamma=\setof{!F}{F\in\Gamma}$ is an admissible labelling of the set of forms 
occurring in $\Gamma$. 
\DefBox{def:configuration}
\end{definition}

In order to specify the effect of applying a production at a given node of a configuration (Definition~\ref{def:firingRuleAG}) we first recall 
some notions about substitutions.

\begin{recall}[on Substitutions]\label{recall:substitutions}
We identify a substitution $\sigma$ on a set of variables $\set{x_1,\ldots,x_k}$, called the \textbf{domain} of $\sigma$, 
with a system of equations $$\setof{x_i=\sigma(x_i)}{1\leq i\leq k}$$ 
The set of variables of $\sigma$, 
defined by $\mathrm{var}(\sigma)=\bigcup_{1\leq i\leq k}\mathrm{var}(\sigma(x_i))$, is disjoint from the domain of $\sigma$. 
Conversely a system of equations $\setof{x_i=t_i}{1\leq i\leq k}$ defines a substitution 
$\sigma$ with $\sigma(x_i)=t_i$ if it is in \textbf{solved form}, i.e.,\  none of the variables $x_i$ appears in some of the terms $t_j$. 
In order to transform a system of equations $E=\setof{x_i=t_i}{1\leq i\leq k}$ into an equivalent system $\setof{x_i=t'_j}{1\leq j\leq m}$ in solved form one can 
iteratively replace an occurrence of a variable $x_i$ in one of the right-hand side term $t_j$  by its definition $t_i$ 
until no variable $x_i$ occurs in some $t_j$. 
This process terminates when the relation $x_i\succ x_j\Leftrightarrow x_j\in \mathrm{var}(\sigma(x_i))$ is acyclic. 
One can easily verify that, under this assumption, the resulting 
system of equation $SF(E)=\setof{x_i=t'_i}{1\leq i\leq n}$ in solved  does not depend on the order in which the variables $x_i$ have 
been eliminated from the right-hand sides.  
When the above  condition is met we say that  the set of equations is \textbf{acyclic} and that it \textbf{defines} the substitution 
associated with its solved form. 
\RecallBox{recall:substitutions}
\end{recall}

The composition of two substitutions $\sigma,\sigma'$ is denoted by $\sigma\sigma'$ and defined by $\sigma\sigma'=\{x=t\sigma'|x=t\in\sigma\}$. 
Similarly, we let $\Gamma\sigma$ denote the configuration obtained from $\Gamma$ by replacing the defining equation $X=F$ of each open node $X$
by $X=F\sigma$.

We now define more precisely when a production is enabled at a given open node of a configuration and the effect of applying the production.
First note that variables of a production are formal parameters which scope is limited to the production.
They can injectively be renamed in order to avoid clashes with variables names appearing in a configuration. 
Therefore we shall always assume that the set of variables of a production $P$ is disjoint from the set of variables of a configuration 
$\Gamma$ when applying production $P$ at a node of $\Gamma$. 
As informally stated in the previous section, a production $P$ applies at an open node $X$ when its left-hand side $s(p_1,\ldots,p_n)\langle u_1,\ldots u_m\rangle$ 
matches with the definition $X=s(d_1,\ldots,d_n)\langle y_1,\ldots,y_m\rangle$, i.e., the service call attached to $X$ in $\Gamma$.

First, the patterns $p_i$ should match with the data $d_i$ according to the usual pattern matching given by the following inductive statements 
\[
 \begin{array}{l}
\mathbf{match}(c(p'_1,\ldots,p'_k),c'(d'_1,\ldots,d'_{k'})) \mbox{~with~} c\neq c' \mbox{~fails~}\\ 
\mathbf{match}(c(p'_1,\ldots,p'_k),c(d'_1,\ldots,d'_k))\;=\;\sum_{i=1}^{k}\mathbf{match}(p'_i,d'_i) \\
  \mathbf{match}(x,d) = \set{x=d}
 \end{array}
\]
where the sum $\sigma =\sum_{i=1}^{k} \sigma_i$ of substitutions $\sigma_i$ is defined and equal to $\bigcup_{i\in 1..k} \sigma_i$ when all substitutions $\sigma_i$ are defined 
and associated with disjoint sets of variables.
Note that since no variable occurs twice in the whole set of patterns $p_i$, the various substitutions $\mathbf{match}(p_i,d_i)$, when defined, 
are indeed concerned with disjoint sets of variables. 
Note also that $\mathbf{match}(c(),c()) = \emptyset$.

\begin{definition}\label{def:match}
A form  $F= s(p_1,\ldots,p_n)\langle u_1,\ldots u_m\rangle$ {\bf matches} with a service call  \\
$F'=s(d_1,\ldots,d_n)\langle y_1,\ldots,y_m\rangle$ (of the same sort) when
\begin{enumerate}
 \item the patterns $p_i$'s matches with the data $d_i$'s, defining a substitution $\sigma_{\mathit{in}}=\sum_{1\leq i\leq n}\mathbf{match}(t_i,d_i)$, 
 \item the set of equations $\setof{y_j=u_j\sigma_{\mathit{in}}}{1\leq j\leq m}$ is acyclic and defines a substitution $\sigma_{\mathit{out}}$.
\end{enumerate}
The resulting substitution $\sigma=\mathbf{match}(F,F')$ is given by $\sigma= \sigma_{\mathit{out}}\cup \sigma_{\mathit{in}}\sigma_{\mathit{out}}$.  
\DefBox{def:match}
\end{definition}

\begin{definition}[Applying a Production]
\label{def:firingRuleAG}
Let $P=F\leftarrow F_1\ldots F_k$  be a production, $\Gamma$ be a configuration, and $X$ be an open node with 
definition $X=s(d_1,\ldots,d_n)\langle y_1,\ldots,y_m\rangle$ in $\Gamma$.
We assume that $P$ and $\Gamma$ are defined over disjoint sets of variables. 
We say that $P$ is \textbf{enabled} in $X$ and write $\netfir{\Gamma}{P/X}{}$, if the left-hand side 
of $P$ matches with the definition of $X$. Then applying production $P$ at node $X$ transforms configuration $\Gamma$ 
into  $\Gamma'$, denoted as  $\netfir{\Gamma}{P/X}{\Gamma'}$, where:
\[
  \begin{array}{rcl}
   \Gamma' & = & \set{X=P(X_1,\ldots,X_k)}\qquad \mbox{where}\quad X_1,\ldots,X_k\quad\mbox{are new nodes added to $\Gamma'$}\\
           & \cup & \set{X_1=F_1\sigma, \ldots, X_k=F_k\sigma} \\
           & \cup & \setof{X'=F\sigma}{(X'=F)\in\Gamma\;\wedge\; X'\neq X} 
  \end{array}
\]
where $\sigma=\mathbf{match}(F,X)$.
\DefBox{def:firingRuleAG}
\end{definition}

Thus the first effect of applying production $P$ to an open node $X$ is that $X$ becomes a closed node with label $P$ and successor 
nodes $X_1$ to $X_k$. The latter are new nodes added to $\Gamma'$. 
They are associated respectively with the instances of the $k$ forms in the right-hand side of $P$ obtained by applying substitution $\sigma$ to these forms.  
The definitions of the other nodes of $\Gamma$ are updated using substitution $\sigma$ (or equivalently $\sigma_{\mathit{out}}$). 
This update has no effect on the closed nodes because their 
defining equations in $\mathit{\Gamma}$ contain no variable. 

We conclude this section with two results 
justifying Definition~\ref{def:firingRuleAG}.
Namely, Proposition~\ref{prop:firingruleAG} states that if $P$ is a production enabled in a node $X_0$ of a configuration $\Gamma$ 
with $\netfir{\Gamma}{P/X_0}{\Gamma'}$ then $\Gamma'$ is a configuration, i.e.,\ applying $P$ cannot create a variable with several 
input occurrences. And Proposition~\ref{prop:correctness} shows that the substitution $\sigma = \mathbf{match}(F,X)$  resulting 
from the matching of the left-hand side $F=s(p_1,\ldots,p_n)\langle u_1,\ldots, u_m\rangle$ of a production $P$  with the 
definition $X=s(d_1,\ldots,d_n)\langle y_1,\ldots,y_m\rangle$ of an open node $X$ is the most general unifier  of the set of equations 
$\setof{p_i=d_i}{1\leq i\leq n}\cup\setof{y_j=u_j}{1\leq j \leq m}$.

\begin{proposition}\label{prop:firingruleAG}
If production $P$ is enabled in an open node $X_0$ of a configuration $\Gamma$ and $\netfir{\Gamma}{P/X_0}{\Gamma'}$ 
then $\Gamma'$ is a configuration.
\end{proposition}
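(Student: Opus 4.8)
The plan is to check the single defining requirement of a configuration from Definition~\ref{def:configuration}: that each open node of $\Gamma'$ carries only variables in its synthesized slots, and that no variable occurs twice in a synthesized position of $\Gamma'$ (equivalently, that $!\Gamma'$ is an admissible labelling in the sense of Definition~\ref{def:DG}). Write $F = s(p_1,\ldots,p_n)\langle u_1,\ldots,u_m\rangle$ for the left-hand side of $P$ and $X_0 = s(d_1,\ldots,d_n)\langle y_1,\ldots,y_m\rangle$ for the definition of the refined node, and recall that $\sigma = \mathbf{match}(F,X_0) = \sigma_{\mathit{out}}\cup\sigma_{\mathit{in}}\sigma_{\mathit{out}}$, so that the domain of $\sigma$ is $\{y_1,\ldots,y_m\}\cup\mathrm{dom}(\sigma_{\mathit{in}})$, where $\mathrm{dom}(\sigma_{\mathit{in}})$ is precisely the set of variables occurring in the patterns $p_1,\ldots,p_n$. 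Since the newly created closed node $X_0=P(X_1,\ldots,X_k)$ contributes no synthesized position, it suffices to examine the two families of open nodes in $\Gamma'$: the new nodes $X_\ell = F_\ell\sigma$ for $1\le\ell\le k$, and the surviving nodes $X'=G\sigma$ for $(X'=G)\in\Gamma$ with $X'\neq X_0$.

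The crux is to show that $\sigma$ acts as the identity on every variable sitting in a synthesized position of these open nodes. For a right-hand side form $F_\ell$, being a service call, its synthesized slots hold distinct variables, and these are input occurrences of $P$ (the synthesized attributes of a right-hand side are labelled $?$). By the single-input-occurrence assumption of Definition~\ref{def:GAG} they are therefore distinct from the pattern variables, i.e.\ disjoint from $\mathrm{dom}(\sigma_{\mathit{in}})$; being variables of $P$ they are also disjoint from $\{y_1,\ldots,y_m\}\subseteq\mathit{var}(\Gamma)$, using that $P$ and $\Gamma$ share no variables. Hence they lie outside $\mathrm{dom}(\sigma)$ and are left untouched, so $X_\ell$ remains a well-formed open node with those same variables as synthesized attributes. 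For a surviving open node $X'=G\sigma$ with $G=s'(\ldots)\langle z_1,\ldots,z_p\rangle$, each $z_i$ is a variable of $\Gamma$ in synthesized position; no $z_i$ can equal some $y_j$, for otherwise that variable would occupy a synthesized position at both $X'$ and $X_0$, contradicting that $\Gamma$ is a configuration, and no $z_i$ is a pattern variable since those belong to $\mathit{var}(P)$. Thus the $z_i$ also lie outside $\mathrm{dom}(\sigma)$ and are fixed.

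With the synthesized attributes of every open node of $\Gamma'$ identified as genuine variables, it remains to argue they are pairwise distinct. The variables contributed by the new nodes are the synthesized variables of $F_1,\ldots,F_k$; being input occurrences of $P$, each is the unique input occurrence of its variable, so they are pairwise distinct. The variables contributed by the surviving nodes are the $z_i$, which are pairwise distinct because $\Gamma$ was a configuration and we have only discarded the node $X_0$. The two families are disjoint, one lying in $\mathit{var}(P)$ and the other in $\mathit{var}(\Gamma)$; moreover the variables $y_1,\ldots,y_m$, which were the only synthesized occurrences contributed by $X_0$ in $\Gamma$, have been removed and, by the previous paragraph, occur in no synthesized position of $\Gamma'$. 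Consequently no variable occurs twice in a synthesized position, which is exactly the admissibility of $!\Gamma'$, and $\Gamma'$ is a configuration.

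I expect the main obstacle to be the bookkeeping of the domain of $\sigma$ in the crux step — specifically, establishing that $\sigma$ restricts to the identity on all synthesized-position variables. This rests on two independent separations: the GAG single-input-occurrence property, which keeps the right-hand side synthesized variables clear of the pattern variables $\mathrm{dom}(\sigma_{\mathit{in}})$, and the configuration property of $\Gamma$, which keeps the surviving synthesized variables $z_i$ clear of the consumed variables $y_j$. Once this is in place the distinctness count is routine, so no genuinely hard estimate is involved; the care lies entirely in tracking which syntactic positions each variable may legitimately occupy.
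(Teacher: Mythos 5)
Your proof is correct and takes essentially the same approach as the paper's: both amount to verifying that $!\Gamma'$ is an admissible labelling, using the single-input-occurrence condition of Definition~\ref{def:GAG} (to keep the right-hand-side synthesized variables clear of the pattern variables), the admissibility of $!\Gamma$ (to keep the surviving synthesized variables clear of $y_1,\ldots,y_m$), and the disjointness of $\mathit{var}(P)$ and $\mathit{var}(\Gamma)$. The only difference is presentational: the paper phrases the conclusion as ``applying the production creates new variable occurrences only in output positions,'' whereas you phrase it as ``$\mathrm{dom}(\sigma)$ misses every synthesized-position variable, so those positions are fixed and can be enumerated and counted directly'' --- the same bookkeeping, organized from the domain side of $\sigma$ rather than from its range.
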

\begin{proof}
Let $P=F\leftarrow F_1\ldots F_k$ with left-hand side $F=s(p_1,\ldots,p_n)\langle u_1,\ldots u_m\rangle$ and 
$X_0=s(d_1,\ldots,d_n)\langle y_1,\ldots,y_m\rangle$ be the defining equation of $X_0$ in $\Gamma$. 
Since the values of synthesized attributes in the forms $F_1,\ldots, F_k$  are variables 
(by Definition~\ref{def:GAG}) and since these variables are unaffected by substitution $\sigma_{\mathit{in}}$ the 
synthesized attribute in the resulting forms $F_j\sigma_{\mathit{in}}$ are variables.
The substitutions $\sigma_{\mathit{in}}$ and $\sigma_{\mathit{out}}$ substitute terms to the variables $x_1,\ldots,x_k$ 
appearing to the patterns and to the variables $y_1,\ldots,y_m$ respectively. 
Since $x_i$ appears in an input position in $P$, it can appear only in 
an output position in the forms $!(F_1),\ldots !(F_k)$ and thus any variable of the term $\sigma_{\mathit{in}}(x_i)$ will appear in an 
output position in $!(F_i\sigma_{\mathit{in}})$. Similarly, since $y_i$ appears in an input position in the form 
$!(s(u_1,\ldots,u_n)\langle y_1,\ldots,y_m\rangle)$, it can only appear in an output position in $!(F)$ for the others 
forms $F$ of $\Gamma$. Consequently any variable of the term $\sigma_{\mathit{out}}(y_i)$ will appear in an 
output position in $!(F\sigma_{\mathit{out}})$ for any equation $X=F$ in $\Gamma$ with $X\neq X_0$. 
It follows that the application of a production cannot produce new occurrences of a variable in an input position 
and thus there cannot exist two occurrences $x^?$ of a same variable $x$ in $\Gamma'$. 
\PropBox{prop:firingruleAG}
\end{proof}

Thus applying an enabled production defines 
a binary relation on configurations.

\begin{definition}\label{def:accessibility}
A configuration $\Gamma'$ is \textbf{directly accessible} from $\Gamma$, denoted by $\netfir{\Gamma}{\,}{\Gamma'}$, whenever
$\netfir{\Gamma}{P/X}{\Gamma'}$ for some production $P$ enabled in node $X$ of configuration $\Gamma$. 
Furthermore, a configuration $\Gamma'$ is \textbf{accessible} from configuration $\Gamma$ when $\netfir{\Gamma}{*}{\Gamma'}$ 
where $\netfir{}{*}{}$ is the reflexive and transitive closure of relation $\netfir{}{\,}{}$.
\DefBox{def:accessibility}
\end{definition}

Recall that a substitution $\sigma$ unifies a set of equations $E$ if $t\sigma=t'\sigma$ for every equations $t=t'$ in $E$. 
A substitution $\sigma$ is more general than a substitution $\sigma'$ if $\sigma'=\sigma\sigma''$ for some substitution $\sigma''$.
If a system of equations has a some unifier, then it has (up to an bijective renaming of the variables in $\sigma$) 
a {\em most general unifier}. In particular a set of equations of the form $\setof{x_i=t_i}{1\leq i\leq n}$ has a unifier if 
and only if it is acyclic. In this case, the corresponding solved form is its most general unifier. 

\begin{proposition}
\label{prop:correctness}
If the left-hand side $F=s(p_1,\ldots,p_n)\langle u_1,\ldots, u_m\rangle$ of a production $P$ matches with the 
definition $X=s(d_1,\ldots,d_n)\langle y_1,\ldots,y_m\rangle$ of an open node $X$ then the substitution 
$\sigma=\mathbf{match}(F,X)$ is the most general unifier  of the set of equations 
$\setof{p_i=d_i}{1\leq i\leq n}\cup\setof{y_j=u_j}{1\leq j \leq m}$.
\end{proposition}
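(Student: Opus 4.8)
The plan is to recognise the set of equations $E=\{p_i=d_i\mid 1\le i\le n\}\cup\{y_j=u_j\mid 1\le j\le m\}$ as the union of two subsystems that are solved successively, exactly mirroring the two clauses of Definition~\ref{def:match}, and to invoke the classical fact that a most general unifier of a union is obtained by composing an mgu of the first subsystem with an mgu of the instantiated second one. Concretely, write $E_1=\{p_i=d_i\}$ and $E_2=\{y_j=u_j\}$. I will show that $\sigma_{\mathit{in}}$ is an mgu of $E_1$, that $\sigma_{\mathit{out}}$ is an mgu of $E_2\sigma_{\mathit{in}}$, and finally that the substitution $\sigma=\sigma_{\mathit{out}}\cup\sigma_{\mathit{in}}\sigma_{\mathit{out}}$ coincides with the ordinary composition of $\sigma_{\mathit{in}}$ and $\sigma_{\mathit{out}}$, so that the composition lemma delivers the claim. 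The lemma I rely on is the standard one: if $\sigma_1$ is an mgu of $E_1$ and $\sigma_2$ an mgu of $E_2\sigma_1$, then $\sigma_1\sigma_2$ is an mgu of $E_1\cup E_2$; its proof is a short factorisation argument (any unifier $\tau$ of $E_1\cup E_2$ factors through $\sigma_1$, the cofactor unifies $E_2\sigma_1$ hence factors through $\sigma_2$), which I would either inline or cite.

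First I would treat $E_1$. The decisive hypotheses here are that matching succeeds and that, since $P$ and $\Gamma$ range over disjoint sets of variables, the $d_i$ contain no pattern variable. By induction on the structure of the patterns following the clauses defining $\mathbf{match}$, successful matching yields $p_i\sigma_{\mathit{in}}=d_i$; since $d_i\sigma_{\mathit{in}}=d_i$, this shows $\sigma_{\mathit{in}}$ unifies $E_1$. For maximality I would argue positionally: a pattern variable $x$ occurs at a unique position $\pi$ of a unique $p_i$ (the patterns share no variable and each variable occurs once), and $\sigma_{\mathit{in}}(x)$ is precisely the subterm $d_i|_\pi$; hence for any unifier $\tau$ of $E_1$ one reads off $x\tau=(p_i\tau)|_\pi=(d_i\tau)|_\pi=\sigma_{\mathit{in}}(x)\tau$, so $\tau=\sigma_{\mathit{in}}\tau$ and $\sigma_{\mathit{in}}$ is most general. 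I expect this step to be the main obstacle, because it is exactly here that the success of matching must be exploited: were some $p_i$ to require a constructor where $d_i$ carries a variable, ordinary unification could still succeed by instantiating that $\Gamma$-variable whereas $\mathbf{match}$ would fail. Thus the equivalence of matching and unification is genuinely conditional on the success hypothesis, and this is the point to state with care.

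Next I would treat $E_2$. As $y_j\in\mathit{var}(\Gamma)$ and $\mathrm{dom}(\sigma_{\mathit{in}})$ consists of pattern variables of $P$, we have $y_j\sigma_{\mathit{in}}=y_j$, so $E_2\sigma_{\mathit{in}}=\{y_j=u_j\sigma_{\mathit{in}}\}$. The $y_j$ are pairwise distinct (each variable occurs at most once in a synthesized position of a configuration), so this is a system of the special shape $\{x_i=t_i\}$; by the fact recalled just before the statement, its acyclicity — guaranteed by clause~2 of Definition~\ref{def:match} — makes it unifiable with solved form $\sigma_{\mathit{out}}$ as mgu. The composition lemma then yields that the ordinary composition $\sigma_{\mathit{in}}\sigma_{\mathit{out}}$ is an mgu of $E_1\cup E_2=E$.

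It remains to reconcile this ordinary composition with the expression $\sigma=\sigma_{\mathit{out}}\cup\sigma_{\mathit{in}}\sigma_{\mathit{out}}$ of Definition~\ref{def:match}, where the paper's $\sigma_{\mathit{in}}\sigma_{\mathit{out}}$ denotes the restricted composition $\{x=t\sigma_{\mathit{out}}\mid (x=t)\in\sigma_{\mathit{in}}\}$, carrying only the bindings of $\mathrm{dom}(\sigma_{\mathit{in}})$. Because $\mathrm{dom}(\sigma_{\mathit{out}})\subseteq\{y_1,\ldots,y_m\}$ is disjoint from the pattern variables $\mathrm{dom}(\sigma_{\mathit{in}})$, the ordinary composition evaluates to $\sigma_{\mathit{in}}(x)\sigma_{\mathit{out}}$ on each pattern variable and to $\sigma_{\mathit{out}}(y)$ on each $y_j$, i.e. to the disjoint union $\sigma_{\mathit{out}}\cup\sigma_{\mathit{in}}\sigma_{\mathit{out}}=\sigma$. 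Hence $\sigma$ is the ordinary composition $\sigma_{\mathit{in}}\sigma_{\mathit{out}}$, and therefore the most general unifier of $E$, as claimed.
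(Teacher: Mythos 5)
Your proof is correct, but it follows a genuinely different route from the paper's. The paper proves the proposition by exhibiting $\mathbf{match}(F,X)$ as one particular run of the Martelli--Montanari algorithm recalled just before the proof: the pattern equations $p_i\stackrel{?}{=}d_i$ are decomposed using rules (1) and (5), the resulting system $\sigma_{\mathit{in}}\cup\{y_j\stackrel{?}{=}u_j\sigma_{\mathit{in}}\}$ is then solved by iterating rule (5) under the acyclicity hypothesis, and the conclusion follows from the meta-theorem that every successful run of the algorithm yields the most general unifier, whatever the order in which rules are applied. You instead never invoke the algorithm's confluence: you split $E$ into $E_1=\{p_i=d_i\}$ and $E_2=\{y_j=u_j\}$, prove directly (via the positional argument, which correctly exploits matching success, the single-occurrence property of pattern variables from Definition~\ref{def:GAG}, and the disjointness of $\mathit{var}(P)$ and $\mathit{var}(\Gamma)$) that $\sigma_{\mathit{in}}$ is an mgu of $E_1$, use the recalled fact on acyclic systems for $E_2\sigma_{\mathit{in}}$, and glue the two with the standard sequential-composition lemma for mgus. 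The paper's argument is shorter because all the work is delegated to the correctness of Martelli--Montanari; yours is more modular, pinpoints exactly where each hypothesis (success of matching, disjoint variables, acyclicity) is consumed, and has the added merit of making explicit that the paper's expression $\sigma_{\mathit{out}}\cup\sigma_{\mathit{in}}\sigma_{\mathit{out}}$ is nothing but the ordinary functional composition of $\sigma_{\mathit{in}}$ and $\sigma_{\mathit{out}}$ on their disjoint domains --- a point the paper leaves implicit. The price is that the composition lemma must be proved or cited, but your one-line factorisation sketch is the standard and complete argument.
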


In order to prove Proposition~\ref{prop:correctness} we first recall some fact about unification.

\begin{recall}[on Unification]\label{recall:unification}
We consider sets  $E=E_?\uplus E_=$ containing equations of two kinds. An equation in $E_?$, denoted as $t\stackrel{?}{=}u$,
represents a \textit{unification goal} whose solution is a substitution $\sigma$ such that $t\sigma=u\sigma$, i.e.,\ substitution $\sigma$ unifies terms $t$ and $u$.
$E_=$ contains only equations of the form $x=t$ where variable $x$ occurs only there, i.e., we do not have two equations with the same variable in their 
left-hand side and such a variable cannot either occur in any right-hand side of an equation in $E_=$. A \textit{solution} to $E$ is any substitution $\sigma$ 
whose domain is the set of variables occurring in the right-hand sides of equations in $E_=$ such that the compound substitution made of $\sigma$ and 
the set of equations $\setof{x=t\sigma}{x=t\in E_=}$ unifies terms $t$ and $u$ for any equation $t\stackrel{?}{=}u$ in $E_?$. 
Two systems of equations are said to be \textit{equivalent} when they have the same solutions. 
A \textit{unification problem} is a set of such equations with $E_==\emptyset$, i.e.,\ it is a set of unification goals. 
On the contrary $E$ is said to be in \textit{solved form} if $E_?=\emptyset$, thus $E$ defines a substitution which, by definition, is the most general solution to $E$.  
Solving a unification problem $E$ consists in finding an equivalent system of equations $E'$ in solved form. 
In that case $E'$ is a \textit{most general unifier} for $E$. 

Martelli and Montanari Unification algorithm \cite{MartelliM82} proceeds as follows.
We pick up non deterministically one equation in $E_?$ and depending on its shape apply the corresponding transformation:
\begin{enumerate}
 \item $c(t_1,\ldots,t_n)\stackrel{?}{=}c(u_1,\ldots,u_n)$: replace it by equations $t_1\stackrel{?}{=}u_1,\ldots, t_1\stackrel{?}{=}u_1$.
 \item $c(t_1,\ldots,t_n)\stackrel{?}{=}c'(u_1,\ldots,u_m)$ with $c\neq c'$: halt with failure.
 \item $x\stackrel{?}{=}x$: delete this equation.
 \item $t\stackrel{?}{=}x$ where $t$ is not a variable: replace this equation by $x\stackrel{?}{=}t$.
 \item $x\stackrel{?}{=}t$ where $x\not\in \mathrm{var}(t)$: replace this equation by $x=t$ and substitute $x$ by $t$ in all other equations of $E$. 
 \item $x\stackrel{?}{=}t$ where $x\in\mathrm{var}(t)$ and $x\neq t$: halt with failure.
\end{enumerate}
The condition in (5) is the occur check. Thus the computation fails either if the two terms of an equation cannot be unified because their main constructors 
are different or because a potential solution of an equation is necessarily an infinite tree due to a recursive statement detected by the occur check. 
System $E'$  obtained from $E$ by applying one of these rules, denoted as $E\Rightarrow E'$, is clearly equivalent to $E$. 
We iterate this transformation as long as we do not encounter a failure and some equation remains in $E_?$. 
It can be proved that all these computations terminate and either the original unification problem $E$ has a solution (a unifier) and every computation 
terminates (and henceforth produces a solved set equivalent to $E$ describing a most general unifier of $E$) or $E$ has no unifier and every computation fails. 
We let
\[
\sigma = \mathbf{mgu}(\setof{t_i=u_i}{1\leq i\leq n})\;\;\mathbf{iff}\;\;
\setof{t_i\stackrel{?}{=}u_i}{1\leq i\leq n}\Rightarrow^* \sigma
\]
\RecallBox{recall:unification}
\end{recall}

Note that (5) and (6) are the only rules that can be applied to solve a unification problem of the form $\setof{y_i\stackrel{?}{=}u_i}{1\leq i\leq n}$, where the $y_i$ are 
distinct variables. The most general unifier exists  when the occur check always holds, i.e.,\ rule (5) always applies. The computation  amounts to 
iteratively replacing an occurrence of a variable $y_i$ in one of the right-hand side term $u_j$  by its definition $u_i$ until no variable $y_i$ occurs in some $u_j$. 
This process terminates precisely when the relation $y_i\succ y_j\Leftrightarrow y_j\in u_i$ is acyclic. When this condition is met we say that  
the set of equations $\setof{y_i=u_i}{1\leq i\leq n}$ \textit{is acyclic} and we say that it \textit{defines}  the substitution 
$\sigma=\mathbf{mgu}(\setof{y_i=u_i}{1\leq i\leq n})$. 

\begin{proof}[Proof of Proposition~\ref{prop:correctness}]\mbox{}\\
If a production $P$ of left-hand side $s(p_1,\ldots,p_n)\langle u_1,\ldots u_m\rangle$ is triggered in node $X_0$ defined by 
$X_0=s(d_1,\ldots,d_n)\langle y_1,\ldots,y_m\rangle$ then by Definition~\ref{def:firingRuleAG}
\[
 \setof{p_i\stackrel{?}{=}d_i}{1\leq i\leq n}\cup\setof{y_j\stackrel{?}{=}u_j}{1\leq j \leq m}\Rightarrow^* \sigma_{\mathit{in}} \cup
 \setof{y_j\stackrel{?}{=}u_j\sigma_{\mathit{in}}}{1\leq j\leq m}
\]
using only the rules (1) and (5). Now 
\[
 \sigma_{\mathit{in}} \cup \setof{y_j\stackrel{?}{=}u_j\sigma_{\mathit{in}}}{1\leq j\leq m} \Rightarrow^* 
 \sigma_{\mathit{in}} \cup \mathbf{mgu}\setof{y_j=u_j\sigma_{\mathit{in}}}{1\leq j\leq m}
\]
by applying iteratively rule~(5) if the set of equations $\setof{y_j=u_j\sigma_{\mathit{in}}}{1\leq j\leq m}$ 
satisfies the occur check. Then $\sigma_{\mathit{in}}+\sigma_{\mathit{out}}\Rightarrow^* \sigma$ again by using rule~(5).
\PropBox{prop:correctness}
\end{proof}

Note that the converse does not hold. Namely, one shall not deduce from Proposition~\ref{prop:correctness} that the 
relation $\netfir{\Gamma}{P/X_0}{\Gamma'}$ is defined  whenever the left-hand side $\mathrm{lhs}(P)$ of $P$  can 
be unified with the definition $\mathrm{def}(X_0,\Gamma)$ of $X_0$ in $\Gamma$ with
\[
  \begin{array}{rcl}
   \Gamma' & = & \set{X_0=P(X_1,\ldots,X_k)}\;\; \mbox{where}\;\; X_1,\ldots,X_k\quad\mbox{are nodes added to $\Gamma'$}\\
           & \cup & \set{X_1=F_1\sigma, \ldots, X_k=F_k\sigma} \\
           & \cup & \setof{X=F\sigma}{(X=F)\in\Gamma\;\wedge\; X\neq X_0}
  \end{array}
\]
where $\sigma=\mathbf{mgu}(\mathrm{lhs}(P),\mathrm{def}(X_0,\Gamma))$ is the corresponding most general unifier. 
Indeed, when unifying $s(d_1,\ldots,d_n,y_1,\ldots,y_m)$ with $s(p_1,\ldots, p_n,u_1,\ldots, u_m)$ one may generate an 
equation of the form $x=t$ where $x$ is a variable in an inherited data $d_i$ and $t$ is an instance of a corresponding subterm in the associated pattern $p_i$. 
This would correspond to a situation where information is sent to the context of a node through one of its inherited attribute!
Otherwise stated some parts of the pattern $p_i$ are actually used to filtered out the incoming data value $d_i$ while some other parts of the same 
pattern are used to transfert synthesized information to the context.  

As already mentioned, an artifact is refined by applying a production at one of its open node. However we also need means to
initiate cases. To this extent, we define interfaces for GAGs, that describe how services can initialize new artifacts.

\begin{definition}
\label{def:interface}
The \textbf{interface} of a guarded attribute grammar is given by a subset $\mathcal{I}$ of forms 
$F=s(t_1,\ldots,t_n)\langle x_1,\ldots,x_m\rangle$, called its \textbf{services} 
where the synthesized positions are (distinct) variables $ x_1,\ldots,x_m$. 
The invocation of the service produces a new artifact reduced to a single open node defined by $F$, 
it is associated with \textbf{initial configuration} $$\Gamma_0=\set{X_0=s(t_1,\ldots,t_n)\langle x_1,\ldots,x_m\rangle}$$ 
An \textbf{accessible configuration} of a guarded attribute grammar is a configuration accessible from 
one of its initial configurations.  
\DefBox{def:interface}
\end{definition}

\begin{example}\label{exple:flattening3}
The attribute grammar for the flattening of a binary tree (Example~\ref{exple:flattening}) can be presented as a guarded  
attribute grammar with the following productions:
\[
\begin{array}{l@{:\quad}r@{\leftarrow}l}
  \mathrm{Fork}    & \mathit{bin}(x)\langle y\rangle &\mathit{bin}(z)\langle y\rangle\;\mathit{bin}(x)\langle z\rangle\\
  \mathrm{Leaf}_a  & \mathit{bin}(x)\langle \mathrm{Cons}_a(x)\rangle &
\end{array}
\]
together with service $\mathrm{Init}\langle x\rangle=\mathit{bin}(Nil)\langle x\rangle$ whose invocation creates a new 
binary tree and expects for the list of its leaves. 
In comparison with Example~\ref{exple:flattening} we avoid the construction of the artificial node \textrm{Root} 
whose sole purpose was to initialize the inherited attribute of the tree at its root.
\ExpleBox{exple:flattening3}
\end{example}

\section{Some Examples}\label{sec:examples}
In this section we illustrate the behaviour of guarded attribute grammars with three examples.

Example~\ref{exple:flattening2} describes an execution of the attribute grammar of Example~\ref{exple:flattening}.  
The specification in Example~\ref{exple:flattening} is actually an ordinary attribute grammar because   
the inherited attributes in the left-hand sides of productions are plain variables. 
This example shows how data are lazily produced and send in push mode through attributes. 
It also illustrates the role of the data links and their dynamic evolutions.

Example~\ref{exple:coprocess} illustrates the role of the guards by describing two processes acting as coroutines.
The first process sends forth a list of values to the second process and it waits for an acknowledgement 
for each message before sending the next one.

Example~\ref{exple:OC} justifies the role of the occur check.

\input{flattening2}
The above example shows that data links  are used to transmit data  in push mode from 
a source vertex $v$ (the input occurrence $x^?$ of a variable $x$) 
to some target vertex $v'$ (an output occurrence $x^!$ of the same variable). 
These links $(x^!,x^?)$ are {\em transient} in the sense that they disappear as soon as 
variable $x$ gets defined by the substitution $\sigma_{\mathit{out}}$ induced by the 
application of a production in some open node of the current configuration. 
If $\sigma_{\mathit{out}}(x)$  is a term $t$, not reduced to a variable, with variables $x_1,\ldots,x_k$ then 
vertex $v'$  is refined by the term $t[x_i^!/x_i]$ and new vertices $v'_i$ 
~---associated with these new occurrences of $x_i$ in an output position---~ are created. 
The original data link $(x^?,x^!)$ is replaced by all the corresponding instances of $(x_i^?,x_i^!)$.  
Consequently, a target is replaced by new targets  which are the recipients for the subsequent pieces of information  
(maybe none because no new links are created when $t$ contains no variable).
If the term $t$ is a variable $y$ then the link $(x^?,x^!)$ is replaced by the link $(y^?,y^!)$ with the same target 
and whose source, the (unique) occurrence $x^?$ of variable $x$, is replaced by the (unique) occurrence $y^?$ of variable $y$. 
Therefore the direction of the flow of information is in both cases preserved: Channels can be viewed as ``generalized streams'' 
(that can fork or vanish) through which information is pushed incrementally. 

\begin{example}\label{exple:coprocess}
Figure~\ref{fig:coroutines} shows a guarded attribute grammar that represents two 
coroutines communicating through lazy streams. 
Each process alternatively sends and receives data.   
More precisely the second process send an acknowlegment (a $b$ message) upon reception of a message send by the left 
process. Initially or after reception of an acknowlegment of its previous message the left process can either send 
a new message or terminate the communication. 
\begin{figure}[htb]
\begin{tabular}{c@{$\quad$}c@{$\quad$}c}
\begin{tikzpicture}[>=stealth',shorten >=1pt, auto,on grid,>=stealth',auto,bend angle=80,
                    every place/.style={minimum size=3.5mm,thick,draw=black!60,inner sep=.2pt},
                    every transition/.style={minimum width=5mm,minimum height=5mm,inner sep=2pt,draw=black!60,thick,fill=white},
                    state/.style={circle,minimum size=5mm,inner sep=1pt,draw=black!25,thick,fill=black!10}]

\begin{scope}[scale=.6]
\node[state]  (x) at (0,2)   {\scriptsize $\|$};
\node[state,label=20:{\scriptsize $q_1$}]  (x1) at (-1.5,0)  {\scriptsize $?$};
\node[state,label=20:{\scriptsize $q'_2$}]  (x2) at (1.5,0) {\scriptsize $?$};
\path[-,color=black!40,very thick] (x)  edge (x1)
                                        edge (x2);

\node[place,draw=green!45,thick,fill=green!10]  (inh1) at (-2.25,-.75)  {\scriptsize $x$};
\node[place,draw=red!45,thick,fill=red!10,label=330:{\tiny $y$}]  (syn1) at (-.75,-.75) {\scriptsize $?$};
\path[-] (x1)  edge (inh1)
               edge (syn1);

\node[place,draw=green!45,thick,fill=green!10]  (inh2) at (.75,-.75)  {\scriptsize $y$};
\node[place,draw=red!45,thick,fill=red!10,label=330:{\tiny $x$}]  (syn2) at (2.25,-.75) {\scriptsize $?$};
\path[-] (x2)  edge (inh2)
               edge (syn2);
               
\path[->,color=black!30,thick] (syn2) edge[bend left] (inh1);
\path[->,color=black!30,thick] (syn1) edge[bend left] (inh2);
\end{scope}
\end{tikzpicture}
&
\begin{tikzpicture}[>=stealth',shorten >=1pt, auto,on grid,>=stealth',auto,bend angle=30,
                    state/.style={circle,minimum size=5mm,inner sep=1pt,draw=black!25,thick,fill=black!10}]
\begin{scope}
 \node at (0,0) {};
\end{scope}

\begin{scope}[scale=.6,yshift=3cm]
\node[state]  (Q1) at (-1,0)   {\scriptsize $q_1$};
\node[state]  (Q2) at (1,0)   {\scriptsize $q_2$};
\node[state]  (Q3) at (-1,-2)   {\scriptsize $-$};
\path[->] (Q1) edge[bend left] node{\scriptsize $!a$} (Q2)
          (Q2) edge[bend left] node{\scriptsize $?b$} (Q1)
          (Q1) edge[swap] node{\tiny $!\mathrm{stop}$} (Q3);
\end{scope}
\end{tikzpicture}
&
\begin{tikzpicture}[>=stealth',shorten >=1pt, auto,on grid,>=stealth',auto,bend angle=30,
                    state/.style={circle,minimum size=5mm,inner sep=1pt,draw=black!25,thick,fill=black!10}]

\begin{scope}
 \node at (0,0) {};
\end{scope}

\begin{scope}[scale=.6,yshift=3cm]
\node[state]  (Q1) at (-1,0)   {\scriptsize $q'_1$};
\node[state]  (Q2) at (1,0)   {\scriptsize $q'_2$};
\node[state]  (Q3) at (1,-2)   {\scriptsize $-$};
\path[->] (Q1) edge[bend left] node{\scriptsize $!b$} (Q2)
          (Q2) edge[bend left] node{\scriptsize $?a$} (Q1)
          (Q2) edge node{\tiny $?\mathrm{stop}$} (Q3);
\end{scope}
\end{tikzpicture}
\\
\begin{scriptsize}
$
\| \;:\; q_0 \;\leftarrow\; q_1(x)\langle y\rangle\; q'_2(y)\langle x\rangle
$
\end{scriptsize}
&
\begin{scriptsize}
$
 \begin{array}{r@{\;:\;}l}
 !a & q_1(x)\langle a(y)\rangle \;\leftarrow\; q_2 (x)\langle y\rangle \\
 ?b & q_2(b(x))\langle y\rangle \;\leftarrow\; q_1 (x)\langle y\rangle \\
 !\mathrm{stop} & q_1(x)\langle \mathrm{stop}\rangle \;\leftarrow\;
 \end{array}
$
\end{scriptsize}
&
\begin{scriptsize}
$
 \begin{array}{r@{\;:\;}l}
 !b & q'_1(y)\langle b(x)\rangle \;\leftarrow\; q'_2 (y)\langle x\rangle \\
 ?a & q'_2(a(y))\langle x\rangle \;\leftarrow\; q'_1 (y)\langle x\rangle \\
 ?\mathrm{stop} & q'_2(\mathrm{stop})\langle y\rangle \;\leftarrow\; 
 \end{array}
$
\end{scriptsize}
\end{tabular}
\caption{Coroutines with lazy streams}
\label{fig:coroutines}
\end{figure}
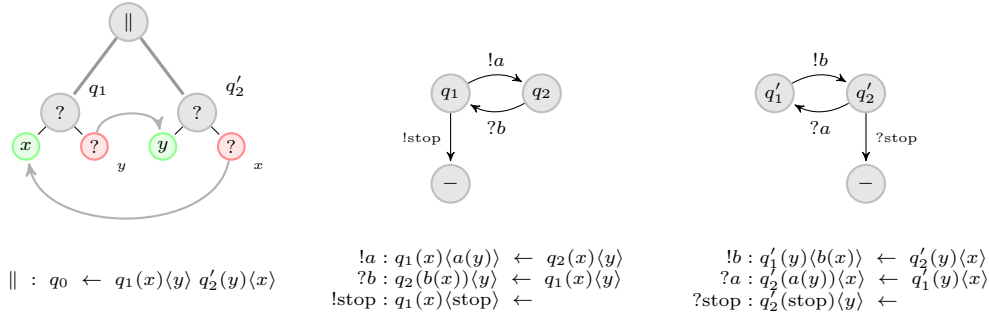

Production $!a : q_1(x')\langle a(y')\rangle \leftarrow q_2(x')\langle y'\rangle$ 
applies at node $X_1$ of the configuration 
\begin{small}
\[
 \Gamma_1 = \set{X=X_1\| X_2, \;\; X_1=q_1(x)\langle y\rangle,\;\; X_2=q'_2(y)\langle x\rangle}
\]
\end{small}
shown on the left of Figure~\ref{fig:coroutines2} 
\begin{figure}[htb]
\begin{tabular}{c@{$\qquad$}c@{$\qquad$}c}
\begin{tikzpicture}[>=stealth',shorten >=1pt, auto,on grid,>=stealth',auto,bend angle=80,
                    every place/.style={minimum size=3.5mm,thick,draw=black!60,inner sep=.2pt},
                    every transition/.style={minimum width=5mm,minimum height=5mm,inner sep=2pt,draw=black!60,thick,fill=white},
                    state/.style={circle,minimum size=5mm,inner sep=1pt,draw=black!25,thick,fill=black!10}]

\begin{scope}[scale=.6]
\node[state]  (x) at (0,2)   {\scriptsize $\|$};
\node[state,label=20:{\scriptsize $q_1$}]  (x1) at (-1.5,0)  {\scriptsize $?$};
\node[state,label=20:{\scriptsize $q'_2$}]  (x2) at (1.5,0) {\scriptsize $?$};
\path[-,color=black!40,very thick] (x)  edge (x1)
                                        edge (x2);

\node[place,draw=green!45,thick,fill=green!10]  (inh1) at (-2.25,-.75)  {\scriptsize $x$};
\node[place,draw=red!45,thick,fill=red!10,label=left:{\tiny $y$}]  (syn1) at (-.75,-.75) {\scriptsize $?$};
\path[-] (x1)  edge (inh1)
               edge (syn1);

\node[place,draw=green!45,thick,fill=green!10]  (inh2) at (.75,-.75)  {\scriptsize $y$};
\node[place,draw=red!45,thick,fill=red!10,label=right:{\tiny $x$}]  (syn2) at (2.25,-.75) {\scriptsize $?$};
\path[-] (x2)  edge (inh2)
               edge (syn2);
               
\path[->,color=black!30,thick] (syn2) edge[bend left] (inh1);
\path[->,color=black!30,thick] (syn1) edge[bend left] (inh2);
\end{scope}
\end{tikzpicture}
&
\begin{tikzpicture}[>=stealth',shorten >=1pt, auto,on grid,>=stealth',auto,bend angle=80,
                    every place/.style={minimum size=3.5mm,thick,draw=black!60,inner sep=.2pt},
                    every transition/.style={minimum width=5mm,minimum height=5mm,inner sep=2pt,draw=black!60,thick,fill=white},
                    state/.style={circle,minimum size=5mm,inner sep=1pt,draw=black!25,thick,fill=black!10}]

\begin{scope}[scale=.6]
\node[state]  (x) at (0,1.5)   {\scriptsize $\|$};
\node[state]  (x1) at (-1.5,0)  {\scriptsize $!a$};
\node[state,label=20:{\scriptsize $q'_2$}]  (x2) at (1.5,0) {\scriptsize $?$};
\node[state,label=20:{\scriptsize $q_2$}]  (x3) at (-1.5,-1.5) {\scriptsize $?$};
\path[-,color=black!40,very thick] (x)  edge (x1)
                                        edge (x2)
                                   (x1) edge (x3);

\node[place,draw=green!45,thick,fill=green!10]  (inh1) at (-2.25,-2.25)  {\scriptsize $x$};
\node[place,draw=red!45,thick,fill=red!10,label=left:{\tiny $y'$}]  (syn1) at (-.75,-2.25) {\scriptsize $?$};
\path[-] (x3)  edge (inh1)
               edge (syn1);

\node[place,draw=green!45,thick,fill=green!10]  (inh2) at (.75,-.75)  {\scriptsize $a$};
\node[place,draw=green!45,thick,fill=green!10]  (inh21) at (.75,-1.5)  {\scriptsize $y'$};
\node[place,draw=red!45,thick,fill=red!10,label=right:{\tiny $x$}]  (syn2) at (2.25,-.75) {\scriptsize $?$};
\path[-] (x2)  edge (inh2)
               edge (syn2)
          (inh2) edge (inh21);
               
\path[->,color=black!30,thick] (syn2) edge[bend left] (inh1);
\path[->,color=black!30,thick] (syn1) edge (inh21);
\end{scope}
\end{tikzpicture}
&
\begin{tikzpicture}[>=stealth',shorten >=1pt, auto,on grid,>=stealth',auto,bend angle=80,
                    every place/.style={minimum size=3.5mm,thick,draw=black!60,inner sep=.2pt},
                    every transition/.style={minimum width=5mm,minimum height=5mm,inner sep=2pt,draw=black!60,thick,fill=white},
                    state/.style={circle,minimum size=5mm,inner sep=1pt,draw=black!25,thick,fill=black!10}]

\begin{scope}[scale=.6]
\node[state]  (x) at (0,1.5)   {\scriptsize $\|$};
\node[state]  (x1) at (-1.5,0)  {\scriptsize $!a$};
\node[state]  (x2) at (1.5,0) {\scriptsize $?a$};
\node[state,label=20:{\scriptsize $q_1$}]  (x3) at (-1.5,-1.5)  {\scriptsize $?$};
\node[state,label=20:{\scriptsize $q'_1$}]  (x4) at (1.5,-1.5) {\scriptsize $?$};
\path[-,color=black!40,very thick] (x)  edge (x1)
                                        edge (x2)
                                    (x1) edge (x3)
                                    (x2) edge (x4);

\node[place,draw=green!45,thick,fill=green!10]  (inh1) at (-2.25,-2.25)  {\scriptsize $x'$};
\node[place,draw=red!45,thick,fill=red!10,label=left:{\tiny $y'$}]  (syn1) at (-.75,-2.25) {\scriptsize $?$};
\path[-] (x3)  edge (inh1)
               edge (syn1);

\node[place,draw=green!45,thick,fill=green!10]  (inh2) at (.75,-2.25)  {\scriptsize $y'$};
\node[place,draw=red!45,thick,fill=red!10,label=right:{\tiny $x'$}]  (syn2) at (2.25,-2.25) {\scriptsize $?$};
\path[-] (x4)  edge (inh2)
               edge (syn2);
               
\path[->,color=black!30,thick] (syn2) edge[bend left] (inh1);
\path[->,color=black!30,thick] (syn1) edge[bend left] (inh2);
\end{scope}
\end{tikzpicture}
\end{tabular}
\caption{\protect{$\netfir{\Gamma_1}{!a/X_1}{}\netfir{\Gamma_2}{?a/X_2}{\Gamma_3}$}}
\label{fig:coroutines2}
\end{figure}
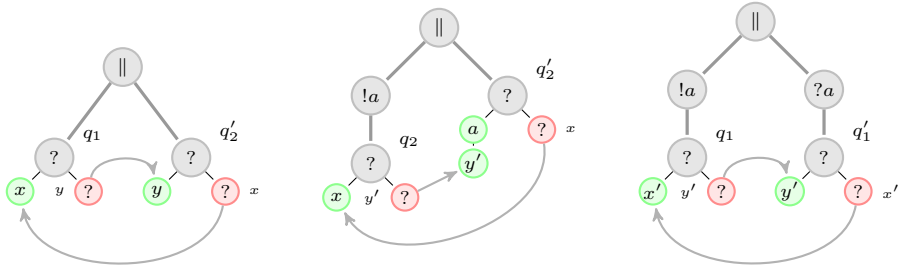
because its left-hand side 
$q_1(x')\langle a(y')\rangle$ matches with the definition 
$q_1(x)\langle y\rangle$ of $X_1$ with $\sigma_{\mathit{in}}=\set{x'=x}$ and 
$\sigma_{\mathit{out}}=\set{y=a(y')}$. One obtains the configuration 
\begin{small}
\[
 \Gamma_2 = \set{X=X_1\| X_2,\;\; X_1=!a(X_{11}),\;\; X_2= q'_2(a(y'))\langle x\rangle, \;\; X_{11}=q_2(x)\langle y'\rangle}
\]
\end{small}
shown on the middle of Figure~\ref{fig:coroutines2}. 
Production $?a:q'_2(a(y))\langle x'\rangle \leftarrow q'_1(y)\langle x'\rangle$ applies at node $X_{2}$ of $\Gamma_2$ 
because its left-hand side $q'_2(a(y))\langle x'\rangle$ matches with the definition 
$q'_2(a(y'))\langle x\rangle$ of $X_2$ with $\sigma_{\mathit{in}}=\set{y=y'}$ and 
$\sigma_{\mathit{out}}=\set{x=x'}$. One obtains the configuration 
\begin{small}
\[
 \Gamma_3 = \set{X=X_1\| X_2,\;\; X_1=!a(X_{11}),\;\; X_2= ?a(X_{21}),\;\; X_{11}=q_2(x')\langle y'\rangle, \;\; X_{21}=q'_1(y')\langle x'\rangle}
\]
\end{small}
shown on the right of Figure~\ref{fig:coroutines2}. 
The corresponding acknowlegment may then be send and received leading to configuration 
\begin{small}
\[
 \Gamma_5 = \Gamma \cup \set{X_{111}=q_1(x)\langle y\rangle, \;\; X_{211}=q'_2(y)\langle x\rangle}
\]
\end{small}
where 
\begin{scriptsize}
$\Gamma=\set{X=X_1\| X_2,\;\; X_1=!a(X_{11}),\;\; X_2= ?a(X_{21}),\;\;X_{21}=!b(X_{211}),\;\; X_{11}= ?b(X_{111})}$.
\end{scriptsize}
The process on the left may decide to end the communication by applying production 
$!\mathrm{stop} : q_1(x')\langle \mathrm{stop}\rangle \leftarrow\;$ at $X_{111}$  
with $\sigma_{\mathit{in}}=\set{x'=x}$ and $\sigma_{\mathit{out}}=\set{y=\mathrm{stop}}$ leading to configuration  
\begin{small}
\[
 \Gamma_6 = \Gamma \cup \set{X_{111}=!\mathrm{stop}, \;\; X_{211}=q'_2(\mathrm{stop})\langle x\rangle}
\]
\end{small}
The reception of this message by the process on the right corresponds to applying production
$?\mathrm{stop} : q'_2(\mathrm{stop})\langle y\rangle \leftarrow\;$ at $X_{211}$ with 
$\sigma_{\mathit{in}}=\emptyset$ and $\sigma_{\mathit{out}}=\set{x=y}$ leading to configuration
\begin{small}
\[
 \Gamma_7 = \Gamma \cup \set{X_{111}=!\mathrm{stop}, \;\; X_{211}=?\mathrm{stop}}
\]
Note that variable $x$ appears in an input position in $\Gamma_6$ and has no corresponding output occurrence. This means that 
the value of $x$ is not used in the configuration. When production $?\mathrm{stop}$ is fired in node $X_{211}$ variable $y$ 
is substituted to $x$. Variable $y$ has an output occurrence in production $?\mathrm{stop}$ and no input occurence meaning that the 
corresponding output attribute is not defined by the semantic rules. As a consequence this variable simply disappears in the 
resulting configuration $\Gamma_7$. If variable $x$ was used in $\Gamma_6$ then the output occurrences of $x$ would have been replaced by 
(output occurrences) of variable $y$ which will remain undefined (no value will be substituted to $y$ in subsequent transformations) 
until these occurrences of variables may possibly disappear. 
\end{small}
\ExpleBox{exple:coprocess}
\end{example}

We  say that a production $P$ is {\bf triggered} in node $X$ if substitution $\sigma_{\mathit{in}}$ is defined,
i.e.,\ the patterns $p_i$ match the data $d_i$. 
As shown by the following example one can usually suspect an error in the specification 
when a triggered transition is not enabled due to the fact that 
the system of equations $\setof{y_j=u_j\sigma_{\mathit{in}}}{1\leq j\leq m}$ is cyclic. 

\begin{example}\label{exple:OC}
Let us consider the guarded attribute grammar given by the following productions:
\[
\begin{array}{c@{\; : \quad}l}
P & s_0(\,)\langle\,\rangle \leftarrow s_1(a(x))\langle x\rangle\quad s_2(x)\langle\,\rangle\\
Q & s_1(y)\langle a(y)\rangle\quad \leftarrow \\
R & s_2(a(z))\langle \,\rangle\quad \leftarrow 
\end{array}
\]
Applying production $P_0$ in node $X_0$ of configuration $\Gamma_0=\set{X_0=s_0(\,)\langle\,\rangle}$ leads to configuration
\[
 \Gamma_1=\set{X_0=P(X_1,X_2);\;X_1=s_1(a(x))\langle x\rangle;\; X_2=s_2(x)\langle\,\rangle}
\]
Production $Q$ is triggered in node $X_1$ with $\sigma_{\mathit{in}}=\set{y=a(x)}$ but the occur check fails 
because variable $x$ occurs in $a(y)\sigma_{\mathit{in}}=a(a(x))$. 
Alternatively, we could drop the occur check and instead  adapt the fixed point semantics for attribute evaluation defined in~\cite{ChiricaM79,Mayoh81} 
in order to cope with infinite data structures. More precisely we could let 
$\sigma_{\mathit{out}}$ be defined as the least solution of the system of equations $\setof{y_i=u_j\sigma_{\mathit{in}}}{1\leq j\leq m}$ 
~---assuming these equations are guarded, i.e.,\ that there is no cycle of copy rules in the link graph of any accessible configuration---. 
In that case the infinite tree $a^{\omega}$ is substituted to variable $x$ and the unique maximal computation associated with the grammar 
is given by the infinite tree $P(Q,R^{\omega})$. 
In Definition~\ref{def:firingRuleAG} we have chosen to restrict ourself to finite data structures which seems a reasonable assumption 
in view of the nature of systems we want to model. The occur check is used to avoid recursive definitions of attribute values.
The given example, whose most natural interpretation is given by fixed point computation, should in that respect be considered as ill-formed.
And indeed this guarded attribute grammar is not \textit{sound} (a notion presented in Section~\ref{sec:soundness}) because the configuration 
$\Gamma$ is not closed (it still contains open nodes), hence it represents a case that is not terminated. However it is a terminal configuration 
since it enables no production. 
\ExpleBox{exple:OC}
\end{example}

\section{Distribution of a Guarded Attribute Grammar}\label{sec:distribution}
The fact that triggered productions are not enabled  can also impact the distributability of a grammar as shown by the following example.

\begin{example}
\label{exple:nonDGAG}
Let us consider the GAG with the following \vspace{-\medskipamount}productions:
\[
\begin{array}{c@{\,:\;}l@{\;\leftarrow\;}l}
P & s(\,)\langle\,\rangle & s_1(x)\langle y\rangle\;\; s_2(y)\langle x\rangle\\
Q & s_1(z)\langle a(z)\rangle & \\
R & s_2(u)\langle a(u)\rangle
\end{array}
\vspace{-\medskipamount}
\]
Production $P$ is enabled in the configuration $\Gamma_0=\set{X_0=s(\,)\langle\,\rangle}$ with $\netfir{\Gamma_0}{P/X_0}{\Gamma_1}$ where 
$\Gamma_1=\set{X_0=P(X_1,X_2);\,X_1=s_1(x)\langle y\rangle,\,X_2=s_2(y)\langle x\rangle}$. 
In configuration $\Gamma_1$ productions $Q$ and $R$ are enabled in nodes $X_1$ and $X_2$ respectively \vspace{-\medskipamount}with 
\[
\begin{array}{l}
\netfir{\Gamma_1}{Q/X_1}{\Gamma_2}\;\;\mathrm{where}\;\;\Gamma_2=\set{X_0=P(X_1,X_2);\,X_1=Q,\,X_2=s_2(a(x))\langle x\rangle}\\
\netfir{\Gamma_1}{R/X_2}{\Gamma_3}\;\;\mathrm{where}\;\;\Gamma_3=\set{X_0=P(X_1,X_2);\,X_1=s_2(a(y))\langle y\rangle,\,X_2=R}
\end{array}
\]
Now production $R$  is triggered but not enabled in node $X_2$  configuration $\Gamma_2$  because of the cyclicity of $\set{x=a(a(x))}$.
Similarly, production $Q$  is triggered but not enabled in node $X_3$  configuration $\Gamma_3$. 
There is a conflict between the application of productions $R$ and $Q$ in configuration $\Gamma_1$,  which makes this specification non-implementable in case nodes 
$X_1$ and $X_2$ have distinct locations. 
\ExpleBox{exple:nonDGAG}
\end{example}

\subsection{Input Enabled Guarded Attribute Grammars}
Substitution $\sigma_{\mathit{in}}$, given by pattern matching, is monotonous w.r.t. incoming information and thus it causes no problem for a 
distributed implementation of a model. However substitution $\sigma_{\mathit{out}}$ is not monotonous since it may become 
undefined when information coming from a distant location makes 
the match of output attributes a cyclic set of equations, 
as illustrated by example~\ref{exple:nonDGAG}.

\begin{definition}
\label{def:WFGAG}
A guarded attribute grammar is \textbf{input-enabled}  
if every production that is triggered in an accessible configuration is also enabled.  
\DefBox{def:WFGAG}
\end{definition}

We call the \textit{substitution induced by a sequence} $\netfir{\Gamma}{*}{\Gamma'}$ the corresponding composition of the 
various substitutions associated respectively with each of the individual steps in the sequence. 
If $X$ is an open node in both $\Gamma$ and $\Gamma'$, i.e.,\ 
no productions are applied at $X$ in the sequence, then we get $X=s(d_1\sigma,\ldots,d_n\sigma)\langle y_1,\ldots,y_m\rangle\in\Gamma'$ 
where $X=s(d_1,\ldots,d_n)\langle y_1,\ldots,y_m\rangle\in\Gamma$ and $\sigma$ is the substitution induced by the sequence. 

\begin{proposition}[Monotony]
\label{prop:monotony}
Let $\Gamma$ be an accessible configuration of an input-enabled GAG, 
$X=s(d_1,\ldots,d_n)\langle y_1,\ldots,y_m\rangle\in\Gamma$ and $\sigma$ the substitution induced by some sequence 
starting from $\Gamma$. 
Then $\netfir{\Gamma}{P/X}{\Gamma'}$ implies $\netfir{\Gamma\sigma}{P/X}{\Gamma'\sigma}$.
\end{proposition}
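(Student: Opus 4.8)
The plan is to split the statement into two claims: (A) that the production $P$ is enabled at $X$ in $\Gamma\sigma$, and (B) that firing it there produces exactly $\Gamma'\sigma$. Let $\Delta$ denote the terminal configuration of the sequence inducing $\sigma$. Since $X$ is not refined along this sequence, the remark preceding the statement gives that $X$ survives in $\Delta$ with defining equation $X=s(d_1\sigma,\ldots,d_n\sigma)\langle y_1,\ldots,y_m\rangle$, which is exactly the defining equation of $X$ in $\Gamma\sigma$. I would first record that $y_j\sigma=y_j$ for all $j$: each variable occurs at most once in a synthesized position (Definition~\ref{def:configuration}) and that occurrence of $y_j$ is at the un-refined node $X$, so $y_j$ never enters the domain of an output substitution $\sigma_{\mathit{out}}$ generated along the sequence.

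The key preliminary tool is a monotonicity lemma for matching: if $\mathbf{match}(p,d)$ is defined and $\theta$ is a substitution whose domain avoids the variables of $p$, then $\mathbf{match}(p,d\theta)$ is defined and $\mathbf{match}(p,d\theta)(x)=\mathbf{match}(p,d)(x)\,\theta$. This is a straightforward induction on $p$: a variable pattern sends $\set{x=d}$ to $\set{x=d\theta}$, while a constructor pattern $c(\ldots)$ forces $d$ to be headed by the same $c$, a shape preserved by $\theta$, so the recursive matches combine by the induction hypothesis. Instantiating $\theta:=\sigma$ on the patterns $p_i$ of $P$ shows that the input substitution obtained by matching the left-hand side of $P$ against $X$ in $\Gamma\sigma$ is $\sigma_{\mathit{in}}\sigma$, where $\sigma_{\mathit{in}}$ is its counterpart in $\Gamma$; in particular $P$ is triggered at $X$ in $\Gamma\sigma$, and likewise in $\Delta$.

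For claim (A) I would then invoke input-enabledness. Enabledness of $P$ at $X$ depends only on the defining equation of $X$ (Definitions~\ref{def:match} and~\ref{def:firingRuleAG}), not on the rest of the configuration. Now $\Delta$ is accessible because $\Gamma$ is, and it carries the same defining equation for $X$ as $\Gamma\sigma$; since $P$ is triggered at $X$ in $\Delta$, input-enabledness guarantees that it is enabled there, i.e.\ the system $\setof{y_j=u_j\sigma_{\mathit{in}}\sigma}{1\le j\le m}$ is acyclic and defines $\sigma'_{\mathit{out}}$. Being local to $X$'s defining equation, enabledness then holds at $X$ in $\Gamma\sigma$ too. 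This is exactly the step where the hypothesis on the grammar is indispensable: without it the output match could turn cyclic once $\sigma$ instantiates the incoming data, which is precisely the pathology of Example~\ref{exple:nonDGAG}.

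It remains to establish claim (B), namely that the two firings commute. Writing $\tau=\mathbf{match}(F,X)$ in $\Gamma$ and $\tau'=\mathbf{match}(F,X)$ in $\Gamma\sigma$, I must match each defining equation of $\Gamma'\sigma$ with the corresponding equation of the configuration obtained by firing $P$ at $X$ in $\Gamma\sigma$. The closed node $X=P(X_1,\ldots,X_k)$ is shared; for the successor forms of $P$ one needs $F_i\tau\sigma=F_i\tau'$ (their variables are fresh, so $\sigma$ leaves $F_i$ untouched), and for every other form $F'$ of $\Gamma$ one needs $F'\tau\sigma=F'\sigma\tau'$. Using $\sigma'_{\mathit{in}}=\sigma_{\mathit{in}}\sigma$, the freshness of $P$'s variables, $y_j\sigma=y_j$, and $\mathrm{dom}(\sigma_{\mathit{out}}),\mathrm{dom}(\sigma'_{\mathit{out}})\subseteq\set{y_1,\ldots,y_m}$, these all reduce to the scalar identities $\sigma_{\mathit{out}}(y_j)\,\sigma=\sigma'_{\mathit{out}}(y_j)$, which I would derive from the iterative solved-form computation of Recall~\ref{recall:substitutions}. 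I expect this commutation to be the main obstacle: the subtle point is that $\sigma$ may carry a synthesized variable $y_j$ into its own range (when, through the sequence, $X$'s output feeds back into $X$'s inherited data), so $\sigma_{\mathit{out}}$ and $\sigma'_{\mathit{out}}$ are not related by plain post-composition as full substitutions; one must track the $y_j$-occurrences by hand and use the acyclicity secured in part (A) to conclude that the values ultimately propagated to the $y_j$ coincide.
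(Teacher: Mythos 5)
You take essentially the same route as the paper: its (very terse) proof rests on exactly your two lemmas, namely $\mathbf{match}(p,d\sigma)=\mathbf{match}(p,d)\sigma$ and the commutation of the solved-form (mgu) computation with $\sigma$, together with the remark that input-enabledness is precisely what rescues the occur check for the instantiated system $\setof{y_j=u_j\sigma_{\mathit{in}}\sigma}{1\leq j\leq m}$. Your additional care --- routing input-enabledness through the accessible terminal configuration $\Delta$ and the locality of enabledness to $X$'s defining equation, and flagging that $\sigma$ may carry some $y_j$ into its range so that $\sigma_{\mathit{out}}$ and $\sigma'_{\mathit{out}}$ are not related by naive post-composition --- only makes explicit what the paper leaves implicit, since its identity $\mathbf{mgu}(\setof{y_j=u_j\sigma}{1\leq j\leq m})=\mathbf{mgu}(\setof{y_j=u_j}{1\leq j\leq m})\sigma$ is asserted by a one-line induction that tacitly assumes this very subtlety away.
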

\begin{proof}[Proof of Proposition~\ref{prop:monotony}]\mbox{}\\
Direct consequence of Definition~\ref{def:GAG} due to the fact that 
\begin{enumerate}
 \item $\mathbf{match}(p,d\sigma) =\mathbf{match}(p,d)\sigma$, and 
 \item $\mathbf{mgu}(\setof{y_j=u_j\sigma}{1\leq j\leq m})=\mathbf{mgu}(\setof{y_j=u_j}{1\leq j\leq m})\sigma$.
\end{enumerate}
The former is trivial and the latter follows by induction on the length of the computation of the most general unifier (relation $\Rightarrow^*$ using 
rule~(5) only). Note that the assumption that the guarded attribute grammar is input-enabled is crucial because in the general case it could happen that 
the set $\setof{y_j=u_j\sigma_{\mathit{in}}}{1\leq j\leq m}$ satisfies the occur check whereas the set 
$\setof{y_j=u_j(\sigma_{\mathit{in}}\sigma)}{1\leq j\leq m}$   does not satisfy the occur check. 
\PropBox{prop:monotony}
\end{proof}

Proposition~\ref{prop:monotony} is instrumental for the distributed implementation of guarded attribute grammars. 
Namely it states that new information coming from a distant 
asynchronous location refining the value of some input occurrences of variables of an enabled production do not 
prevent from applying that production. Thus a production that is locally enabled can  freely be applied 
regardless of information that might further refine the current local configuration. 
It means that conflict arises only from the existence of two distinct productions enabled in the same 
open node. Hence the only form of non-determinism corresponds to the decision of a stakeholder to apply 
one particular production among those enabled in a configuration. This is expressed by the following confluence property. 

\vspace{-\medskipamount}
\begin{corollary}
\label{cor:confluence}
Let $\Gamma$ be an accessible configuration of an input enabled GAG. 
If $\netfir{\Gamma}{P/X}{\Gamma_1}$ and $\netfir{\Gamma}{Q/Y}{\Gamma_2}$ with $X\neq Y$ then 
$\netfir{\Gamma_2}{P/X}{\Gamma_3}$ and $\netfir{\Gamma_1}{Q/Y}{\Gamma_3}$ for some configuration $\Gamma_3$. 
\end{corollary}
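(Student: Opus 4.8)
The plan is to prove the diamond in two stages: first show that the two ``closing'' steps exist, using Monotony (Proposition~\ref{prop:monotony}), then show that the two resulting configurations coincide, using the fact that the substitution induced by a two-step sequence is a most general unifier (Proposition~\ref{prop:correctness} together with Recall~\ref{recall:unification}). Throughout I would exploit the renaming freedom recalled before Definition~\ref{def:firingRuleAG} to fix, once and for all, the fresh nodes and variables introduced when firing $P$ at $X$ and when firing $Q$ at $Y$, so that these two batches of fresh names are mutually disjoint and disjoint from $\Gamma$. This makes the node sets produced by the two firing orders literally identical and lets me compare the two configurations name by name.

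First I would observe that, since $X\neq Y$, firing $Q$ at $Y$ neither closes $X$ nor alters its synthesized variables, so $X$ is still an open node of $\Gamma_2$ with definition $\mathrm{def}(X,\Gamma)\sigma_Q$, where $\sigma_Q$ is the substitution induced by $\netfir{\Gamma}{Q/Y}{\Gamma_2}$. The enabledness of $P$ at $X$ and the substitution it produces depend only on the definition of $X$; since that definition agrees in $\Gamma_2$ and in $\Gamma\sigma_Q$, and since Monotony applied to $\netfir{\Gamma}{P/X}{\Gamma_1}$ gives $\netfir{\Gamma\sigma_Q}{P/X}{\Gamma_1\sigma_Q}$, it follows that $P$ is enabled at $X$ in $\Gamma_2$, yielding $\netfir{\Gamma_2}{P/X}{\Gamma_3^{(2)}}$. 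Symmetrically $Q$ is enabled at $Y$ in $\Gamma_1$, yielding $\netfir{\Gamma_1}{Q/Y}{\Gamma_3^{(1)}}$. This is the step where \emph{input-enabledness} is essential: it guarantees that the output system of the second production stays acyclic (passes the occur check) even after the first production has refined the shared data, which is exactly the failure illustrated by Example~\ref{exple:nonDGAG}.

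It then remains to prove $\Gamma_3^{(1)}=\Gamma_3^{(2)}$. Let $E_P$ and $E_Q$ be the matching systems of $P$ at $X$ and of $Q$ at $Y$ in $\Gamma$, so that $\sigma_P=\mathbf{mgu}(E_P)$ and $\sigma_Q=\mathbf{mgu}(E_Q)$ by Proposition~\ref{prop:correctness}. Because the pattern variables of $P$, the pattern variables of $Q$, the synthesized variables of $X$, and the synthesized variables of $Y$ are pairwise disjoint (the last two by Definition~\ref{def:configuration}), a short audit shows that the system matched in the second step of the first order is exactly $E_Q\sigma_P$; hence the substitution induced by firing $P$ at $X$ and then $Q$ at $Y$ is $\sigma_P\cdot\mathbf{mgu}(E_Q\sigma_P)$, and symmetrically the other order induces $\sigma_Q\cdot\mathbf{mgu}(E_P\sigma_Q)$. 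By the incremental-unification identity $\mathbf{mgu}(E_1)\cdot\mathbf{mgu}(E_2\,\mathbf{mgu}(E_1))=\mathbf{mgu}(E_1\cup E_2)$, a consequence of the confluence of the Martelli--Montanari procedure of Recall~\ref{recall:unification}, both composites equal the single substitution $\theta=\mathbf{mgu}(E_P\cup E_Q)$. Finally I would read off both configurations node by node: in each, $X$ becomes the closed node $P(X_1,\ldots,X_k)$, $Y$ becomes the closed node $Q(Y_1,\ldots,Y_\ell)$, and every remaining open form, as well as every child form, is the corresponding form of $\Gamma$ (respectively of $P$, of $Q$) with $\theta$ applied; here the disjointness of the fixed names makes $\sigma_P$ act as the identity on $Q$'s variables and conversely, so the bookkeeping matches. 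Hence $\Gamma_3^{(1)}=\Gamma_3^{(2)}=:\Gamma_3$.

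The main obstacle is this last equality, and inside it the identification of the two induced substitutions as one common most general unifier: it relies both on the disjointness audit above and on the order-independence of unification, and it is precisely the point at which one must be certain the occur check cannot intervene differently in the two orders --- which is exactly what input-enabledness secures through Monotony.
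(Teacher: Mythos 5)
Your proposal is correct, and its first half is exactly the route the paper intends: the paper states this corollary with no proof at all, presenting it as an immediate consequence of Proposition~\ref{prop:monotony}, which is precisely your enabledness step --- including the necessary observation that enabledness of $P$ at $X$ depends only on $\mathrm{def}(X,\cdot)$, so that Monotony, which formally speaks of $\Gamma\sigma_Q$ rather than of $\Gamma_2$, does transfer to $\Gamma_2$. Where you go beyond the paper is the second half: the paper nowhere argues that the two closing steps land on the \emph{same} configuration $\Gamma_3$; the closest it comes is the remark inside the proof of Proposition~\ref{prop:distribution}(3) that the joint result is ``simply $\Gamma$ where both open nodes have been replaced,'' justified only by disjointness of variables. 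Your identification of the two induced substitutions with the single substitution $\mathbf{mgu}(E_P\cup E_Q)$, via Proposition~\ref{prop:correctness} and the incremental-unification identity for the Martelli--Montanari procedure, is a clean and correct way to discharge this, and your disjointness audit (production variables renamed apart from each other and from $\Gamma$; synthesized variables of $X$ and $Y$ distinct by admissibility of $!\Gamma$) is exactly what makes $E_Q\sigma_P$ the matching system of the second step and makes $\sigma_P$ act as the identity on $Q$'s variables, so the node-by-node comparison goes through. In short: same key lemma (Monotony) for the existence of the closing steps, plus a rigorous unification-theoretic treatment of the equality that the paper leaves entirely implicit; the added value is a proof that the diamond closes on the nose, with literally identical fresh names and attribute values, rather than merely up to renaming.
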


Note that, by Corollary~\ref{cor:confluence}, the artifact contains a full history of the case in the sense that one can reconstruct from the artifact 
the complete sequence of applications of productions leading to the resolution of the case (up to the commutation of independent elements in the sequence).

We might have considered a more symmetrical presentation in Definition~\ref{def:GAG}  by allowing patterns 
for synthesized attributes in the right-hand sides of productions  with the effect of 
creating forms in a configuration with patterns in their co-arguments. These patterns express constraints on synthesized values.
This extension could be acceptable as long as one sticks to  purely centralized models. 
However, as soon as one wants to distribute the model on an asynchronous architecture, one cannot avoid such a constraint to be further refined 
due to a transformation occurring in a distant location. Then the monotony property (Proposition~\ref{prop:monotony}) is lost: a locally enabled 
production can later be disabled when a constraint on a synthesized value gets a refined value. 
This is why we required synthesized attributes in the right-hand side of a production to be given by plain variables in order to prohibit 
the expression of constraints on synthesized values. 

It is difficult to verify input-enabledness as the whole set of accessible configurations are involved in this condition. 
Nevertheless one can find a sufficient condition for input enabledness, similar to the strong non-circularity of 
attribute grammars \cite{CourcelleF82}, that can be checked by a simple fixed-point computation. 

\begin{definition}\label{def:acyclic}
Let $s$ be a sort of a guarded attribute grammar with $n$ inherited attributes and $m$ synthesized attributes. 
We let $(j,i)\in SI(s)$ where $1\leq i\leq n$ and $1\leq j\leq m$ if exists 
$X=s(d_1,\ldots,d_n)\langle y_1,\ldots, y_m\rangle\in\Gamma$ where $\Gamma$ is an accessible configuration and 
$y_j\in d_i$. 
If $P$ is a production with left-hand side $s(p_1,\ldots,p_n)\langle u_1,\ldots,u_m\rangle$ we let 
$(i,j)\in IS(P)$ if exists a variable $x\in \mathrm{var}(P)$ such that $x\in \mathrm{var}(d_i)\cap\mathrm{var}(u_j)$.
The guarded attribute grammar $G$ is said to be \textbf{acyclic} if for every sort $s$ and production $P$  whose 
left-hand side is a form of sort $s$ the graph $G(s,P)=SI(s)\cup IS(P)$ is acyclic. 
\DefBox{def:acyclic}
\end{definition}
\begin{proposition}\label{prop:acyclic}
An acyclic guarded attribute grammar is input-enabled.
\end{proposition}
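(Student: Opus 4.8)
The plan is to argue by contraposition at the level of a single triggered production: I will show that if some production $P$ is triggered but not enabled at a node $X$ of an accessible configuration $\Gamma$, then the graph $G(s,P)$ must contain a cycle, so the grammar cannot be acyclic. Recall that $P$ triggered at $X$ means $\sigma_{\mathit{in}}=\sum_i\mathbf{match}(p_i,d_i)$ is defined, where $X=s(d_1,\ldots,d_n)\langle y_1,\ldots,y_m\rangle$ is the definition of $X$ in $\Gamma$ and $s(p_1,\ldots,p_n)\langle u_1,\ldots,u_m\rangle$ is the left-hand side of $P$. By Definition~\ref{def:match}, $P$ fails to be enabled precisely when the system $\setof{y_j=u_j\sigma_{\mathit{in}}}{1\leq j\leq m}$ is cyclic, which by the characterisation in Recall~\ref{recall:substitutions} means the relation $y_j\succ y_{j'}\Leftrightarrow y_{j'}\in u_j\sigma_{\mathit{in}}$ has a cycle.

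First I would isolate a local lemma describing a single step $y_{j'}\in u_j\sigma_{\mathit{in}}$ of this relation. Since the variables of $P$ are chosen disjoint from those of $\Gamma$, the configuration variable $y_{j'}$ can occur in $u_j\sigma_{\mathit{in}}$ only through the image $\sigma_{\mathit{in}}(x)$ of some production variable $x\in\mathrm{var}(u_j)$. As $\sigma_{\mathit{in}}$ is obtained purely by pattern matching, $x$ receives a value only if it occurs in some pattern $p_i$, and then $\sigma_{\mathit{in}}(x)$ is exactly the subterm of $d_i$ sitting at the position of $x$ in $p_i$; hence $y_{j'}\in\sigma_{\mathit{in}}(x)$ forces $y_{j'}\in d_i$. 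This single step therefore yields simultaneously $(j',i)\in SI(s)$ (because $y_{j'}$ occurs in the inherited datum $d_i$ of the node $X$, which lives in the \emph{accessible} configuration $\Gamma$) and $(i,j)\in IS(P)$ (because the same variable $x$ lies in $\mathrm{var}(p_i)\cap\mathrm{var}(u_j)$). In other words each arc $y_j\succ y_{j'}$ factors in $G(s,P)$ as a path $j'\to i\to j$ whose first arc comes from $SI(s)$ and whose second from $IS(P)$.

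Next I would lift this to the whole cycle. Given a $\succ$-cycle $y_{j_1}\succ y_{j_2}\succ\cdots\succ y_{j_\ell}\succ y_{j_1}$ (including the case $\ell=1$ of a self-loop), applying the local lemma to each step $y_{j_t}\succ y_{j_{t+1}}$ (indices mod $\ell$) produces intermediate inherited indices $i_1,\ldots,i_\ell$ with paths $j_{t+1}\to i_t\to j_t$ in $G(s,P)$. Concatenating them for $t=\ell,\ell-1,\ldots,1$ gives the cycle $j_1\to i_\ell\to j_\ell\to i_{\ell-1}\to\cdots\to j_2\to i_1\to j_1$ in $G(s,P)=SI(s)\cup IS(P)$, contradicting acyclicity of the grammar. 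Hence every triggered production at an accessible configuration is enabled, which is exactly input-enabledness in the sense of Definition~\ref{def:WFGAG}.

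I expect the only delicate point to be the bookkeeping in the local lemma: one must check carefully that the configuration variable $y_{j'}$ can enter $u_j\sigma_{\mathit{in}}$ \emph{solely} through a pattern variable shared between $p_i$ and $u_j$, using both the disjointness of production and configuration variables and the fact that $\sigma_{\mathit{in}}$ substitutes subterms of the data $d_i$ for pattern variables. Once this alternation of $SI$-arcs (from the context of $X$) and $IS$-arcs (from $P$) is pinned down, the translation of a $\succ$-cycle into a cycle of $G(s,P)$ is purely combinatorial.
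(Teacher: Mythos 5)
Your proof is correct and takes essentially the same route as the paper's own (very terse, two-sentence) argument: a production that is triggered but not enabled yields, through the pattern-matching substitution $\sigma_{\mathit{in}}$, an alternation of $SI(s)$-arcs and $IS(P)$-arcs that converts the failed occur check into a cycle of $G(s,P)$. Your write-up merely fills in the details the paper leaves implicit (the local lemma tracing a configuration variable $y_{j'}$ in $u_j\sigma_{\mathit{in}}$ back to a shared pattern variable, and the concatenation of the resulting paths), and it silently corrects the typo in Definition~\ref{def:acyclic} where $\mathrm{var}(d_i)$ should read $\mathrm{var}(p_i)$.
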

\begin{proof}
Suppose $P$ is triggered in node $X$ with substitution $\sigma_{\mathit{in}}$ such that $y_j\in u_i\sigma_{\mathit{in}}$ then $(i,j)\in G(s,P)$. 
Then the fact that occur check fails for the set $\setof{y_j}{1\leq j\leq m}$ entails that one can find a cycle in $G(s,P)$.
\PropBox{prop:acyclic}
\end{proof}

Relation $SI(s)$ still takes into account the whole set of accessible configurations.
The following definition provides an overapproximation of this relation given  
by a fixed point computation. 

\begin{definition}\label{def:GDL}
The \textbf{graph of local dependencies} of a production $P:F_0\leftarrow F_1\cdots F_{\ell}$  is the directed graph $GLD(P)$ that records the data dependencies 
between the occurrences of attributes given by the semantics rules. We designate the occurrences of attributes of $P$ as follows: 
we let $k(i)$ (respectively $k\langle j\rangle$) 
denote the occurrence of the $i^{th}$ inherited attribute (resp. the $j^{th}$ synthesized attribute) in $F_k$. 
If $s$ is a sort  with $n$ inherited attributes and $m$ synthesized attributes we define the relations $\overline{IS(s)}$ and $\overline{SI(s)}$ over 
$[1,n]\times [1,m]$ and $[1,m]\times [1,n]$ respectively as the least relations such that :
\begin{enumerate}
 \item $\overline{SI(s)}=SI(s)$ if $s$ is an axiom, i.e.,\ it is given by the set of pairs $(j,i)$ such that $y_j\in \mathrm{var}(d_i)$ for some service 
 $F=s(d_1,\ldots,d_n)\langle y_1,\ldots,y_m\rangle$ of sort $s$ in the interface of the guarded attribute grammar.
 \item For every production $P:F_0\leftarrow F_1\cdots F_{\ell}$ where form $F_i$ is of sort $s_i$ and for every $k\in [1,\ell]$ 
 \[
  \setof{(j,i)}{(k\langle j\rangle,k(i))\in GLD(P)^{k}} \subseteq \overline{SI(s_k)}
 \]
where graph $GLD(P)^{k}$ is given as the transitive closure of
\[
 \begin{array}{c}
 GLD(P) \cup  \setof{(0\langle j\rangle,0(i))}{(j,i)\in \overline{SI(s_{0})}} \\
 \cup \setof{(k'(i),k'\langle j\rangle)}{k'\in [1,\ell],\; k'\neq k,\; (i,j)\in \overline{IS(s_{k'})}}
  
  \end{array}
\]
 \item For every production $P:F_0\leftarrow F_1\cdots F_{\ell}$ where form $F_i$ is of sort $s_i$ 
 \[
  \setof{(i,j)}{(0(i),0\langle j\rangle)\in GLD(P)^{0}} \subseteq \overline{IS(s_0)}
 \]
where graph $GLD(P)^{0}$ is given as the transitive closure of
\[
 GLD(P) \cup \setof{(k(i),k\langle j\rangle)}{k\in [1,\ell],\; (i,j)\in \overline{IS(s_{k})}}
\]
\end{enumerate}
The guarded attribute grammar $G$ is said to be \textbf{strongly-acyclic} if for every sort $s$ and production $P$  whose 
left-hand side is a form of sort $s$ the graph $\overline{G(s,P)}=\overline{SI(s)}\cup IS(P)$ is acyclic. 
\DefBox{def:GDL}
\end{definition}

\begin{proposition}\label{prop:stronglyAcyclic}
A strongly-acyclic guarded attribute grammar is acyclic and hence input-enabled. 
\end{proposition}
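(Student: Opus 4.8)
The plan is to prove the two implications in the statement separately. The second one---that an acyclic grammar is input-enabled---is already available as Proposition~\ref{prop:acyclic}, so it suffices to establish that strong acyclicity implies acyclicity. For this I would first reduce the statement to a single containment of relations. Observe that, for a fixed sort $s$ and production $P$, the graphs $G(s,P)$ and $\overline{G(s,P)}$ live on the same vertex set (the inherited positions $[1,n]$ and synthesized positions $[1,m]$ of $s$) and share the same edge contribution $IS(P)$; they differ only in that one uses $SI(s)$ and the other $\overline{SI(s)}$. Hence, if I can show that
\[
  SI(s)\;\subseteq\;\overline{SI(s)}\qquad\text{for every sort }s,
\]
then $G(s,P)$ is a subgraph of $\overline{G(s,P)}$ on the same vertices; since a subgraph of a directed acyclic graph is again acyclic, strong acyclicity of every $\overline{G(s,P)}$ (Definition~\ref{def:GDL}) immediately yields acyclicity of every $G(s,P)$ (Definition~\ref{def:acyclic}), and Proposition~\ref{prop:acyclic} finishes the argument.

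The core of the work is thus the containment $SI(s)\subseteq\overline{SI(s)}$, which I would prove by induction on the length of a derivation $\netfir{\Gamma_0}{*}{\Gamma}$ witnessing an accessible configuration $\Gamma$ in which a pair $(j,i)\in SI(s)$ is realized, i.e.\ an open node $X=s(d_1,\dots,d_n)\langle y_1,\dots,y_m\rangle$ with $y_j\in\mathit{var}(d_i)$. To make the induction go through I would strengthen the hypothesis so that it simultaneously controls the inherited-to-synthesized dependencies produced inside already-refined subtrees, showing that these are captured by $\overline{IS}$; the two relations $\overline{SI}$ and $\overline{IS}$ are defined by a single simultaneous least fixed point (Definition~\ref{def:GDL}), and their closure clauses (1)--(3) mirror exactly the way dependencies propagate under a production application. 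In the base case $\Gamma=\Gamma_0$ is reduced to a single service node of an axiom sort, and a pair $y_j\in\mathit{var}(d_i)$ is put into $\overline{SI(s)}$ by clause~(1). For the inductive step I would consider the last applied production $\netfir{\Gamma}{P/X_0}{\Gamma'}$, with $P:F_0\leftarrow F_1\cdots F_\ell$ and $\sigma=\mathbf{match}(F_0,X_0)$, and examine how a synthesized variable can come to occur in an inherited term of an open node of $\Gamma'$: either in one of the freshly created children $X_k$ (whose inherited terms are $t^{(k)}_i\sigma$) or in a surviving context node whose definition is updated by $\sigma_{\mathit{out}}$. In each case I would exhibit the new occurrence as the image, under the splicing performed by $\sigma_{\mathit{in}}$ and $\sigma_{\mathit{out}}$, of a directed path in the augmented local graphs $GLD(P)^{k}$ or $GLD(P)^{0}$: the arcs of $GLD(P)$ realize the semantic rules of $P$, the arcs added from $\overline{SI(s_0)}$ record the context dependency above $X_0$ (available by the induction hypothesis applied to $\Gamma$), and the arcs added from $\overline{IS(s_{k'})}$ record the dependencies already realized in sibling subtrees. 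Clauses~(2) and~(3) then place the resulting pair into the appropriate $\overline{SI}$ or $\overline{IS}$.

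The main obstacle I anticipate is precisely this bookkeeping of substitutions. The delicate point is that the occurrence $y_j\in\mathit{var}(d_i)$ need not appear at the instant $X$ is created: it can materialise only later, once a sibling subtree has been refined far enough to synthesise a value depending on its own inherited data, or once the context above has fed a synthesized value back down, this refined value being then propagated into $d_i$ by the accumulated $\sigma_{\mathit{out}}$. Handling this cleanly forces the simultaneous invariant on $\overline{IS}$ together with a careful use of the monotony of the construction (Proposition~\ref{prop:monotony}), so that a dependency once present is never destroyed along the derivation. A secondary but genuine subtlety is the \emph{direction} of the fixed-point inclusion: since $\overline{SI}$ and $\overline{IS}$ are the least relations closed under (1)--(3), I must argue that every realized dependency is derivable by these closure rules---equivalently, that the family of realized dependencies is itself closed under them---so that it is contained in the least such family. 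Establishing that the path-splicing of the inductive step is faithfully recorded by the transitive closures $GLD(P)^{k}$ and $GLD(P)^{0}$, with no realized dependency escaping them, is where the real content of the proof lies.
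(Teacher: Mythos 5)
Your proposal follows essentially the same route as the paper's own proof: the paper likewise introduces the \emph{realized} dependency relation $IS(s)$ (defined via the substitution induced by a firing sequence starting from $\set{X=F}$) and proves, by induction on the length of the firing sequence, the simultaneous containments $IS(s)\subseteq\overline{IS(s)}$ and $SI(s)\subseteq\overline{SI(s)}$ --- exactly your strengthened invariant. Your preliminary reduction (that $SI(s)\subseteq\overline{SI(s)}$ makes $G(s,P)$ a subgraph of $\overline{G(s,P)}$, so acyclicity transfers, and Proposition~\ref{prop:acyclic} concludes) is left implicit in the paper, and your description of the inductive step via the augmented graphs $GLD(P)^{k}$ and $GLD(P)^{0}$ is, if anything, more detailed than the paper's brief sketch.
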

\begin{proof}
The proof is analog to the proof that a strongly non-circular attribute grammar is non-circular and it goes as follows. 
We let $(i,j)\in IS(s)$ when $\mathrm{var}(d_i\sigma)\cap\mathrm{var}(y_j\sigma)\neq\emptyset$ for some form  
$F=s(d_1,\ldots,d_n)\langle y_1,\ldots, y_m\rangle$ of sort $s$ and where $\sigma$ is the substitution induced 
by a firing sequence starting from configuration $\set{X=F}$. Then we show by induction on the length 
of the firing sequence leading to the accessible configuration that $IS(s)\subseteq \overline{IS(s)}$ and $SI(s)\subseteq \overline{SI(s)}$. 
\PropBox{prop:stronglyAcyclic}
\end{proof}

Note that the following two inclusions are strict
\[
 \mbox{strongly-acyclic GAGs} \subsetneq \mbox{acyclic GAGs} \subsetneq \mbox{input enabled GAGs}
\]
Indeed the reader may easily check that the guarded attribute grammar 
\[
 \left\{
 \begin{array}{l}
  A(x)\langle z\rangle \leftarrow B(a(x,y))\langle y,z\rangle \\
  B(a(x,y))\langle x,y\rangle \leftarrow 
 \end{array}
 \right.
\]
is cyclic and input-enabled 
whereas guarded attribute grammar with productions 
\[
 \left\{
 \begin{array}{l}
  A(x)\langle z\rangle \leftarrow B(y,x)\langle z,y\rangle \\
  A(x)\langle z\rangle \leftarrow B(x,y)\langle y,z\rangle \\
  B(x,y)\langle x,y\rangle \leftarrow 
 \end{array}
 \right.
\]
is acyclic but not strongly-acyclic. 
Attribute grammars arising from real situations are almost always strongly non-circular so that this assumption is not really restrictive.
Similarly we are confident that most of the guarded attribute grammars that we shall use in practise will be input-enabled 
and that most of the input-enabled guarded attribute grammars are in fact strongly-acyclic. 
Thus most of the specifications are distributable and most of those can be proved so by checking the strong non-circularity condition. 

\subsection{Distribution of an Input Enabled Guarded Attribute Grammar}
The principle of a distribution of a GAG on a set of locations is as follows: 
Each location maintains a local configuration, and subscribes to results provided by other locations. 
Productions are applied locally. 
When variables are given a value by a production, the location that computed this value sends messages to the locations that subscribed to this value. 
Messages are simply equations defining the value of a particular variable.   
Upon reception of a messages, a subscriber updates its local configuration, and may in turn produce new messages.

More formally, a GAG can be distributed by specifying a partition $S=\uplus_{1\leq \ell\leq p}S_{\ell}$ of the set of sorts. The projections $\Gamma_{\ell}$, 
called the local configurations associated with sites $S_{\ell}$, are defined as follows. Each site $S_{\ell}$ has a namespace $ns(S_{\ell})$ used 
for the nodes $X$ whose sorts are in $S_{\ell}$ and for the variables $x$ representing attributes of these nodes but also for references to variables 
belonging to distant sites (subscriptions). Hence we have name generators that produce unique identifiers for each newly created 
variable for each site. For each equation $X=P(X_1,\ldots, X_n)$ with $X::s$ and $X_i::s_i$ we insert equation 
$X=P(\overline{X}_1,\ldots, \overline{X}_n)$ in $\Gamma_{\ell}$ where $s\in S_{\ell}$ and variable $\overline{X}_i$ is $X_i$ if $s_i\in S_{\ell}$ or is a new variable 
in the namespace of $S_{\ell}$ if $s_i\in S_{\ell'}$  with $\ell'\neq \ell$. In the latter case we add equation $\overline{X}_i=X_i$ in $\Gamma_{\ell}$. 
Similarly for each equation $X=s(t_1,\ldots, t_n)\langle y_1,\ldots,y_m\rangle$ in $\Gamma$ we add equation 
$X=s(\overline{t}_1,\ldots, \overline{t}_n)\langle \overline{y}_1,\ldots,\overline{y}_m\rangle$ in $\Gamma_{\ell}$ where $s\in S_{\ell}$ and 
$\overline{t}$ is obtained by replacing each variable $x$ in term $t$ by $\overline{x}$ where variable $\overline{x}$ is $x$ if $x::s'$ with $s'\in S_{\ell}$ else is a 
new variable in the namespace of $S_{\ell}$. In the latter case one adds equation $\overline{x}=x$, called 
a {\bf subscription}, to $\Gamma_{\ell'}$. Similarly for the variables $y_j$. Hence a local configuration contains the usual equations associated with their closed and open nodes (and containing only local variables) 
together with equations of the form $X=Y$ and $y=x$ where $x$ and $X$ are local names and $y$ and $Y$ belongs to distant sites. 
Clearly the global configuration can be recovered as $\Gamma=\Gamma_1\oplus \cdots \oplus \Gamma_n$ where operator $\oplus$ consists in taking the union of the systems 
of equations given as arguments and simplifying the resulting system by elimination of the copy rules: we drop each equation of the form $X=Y$ (respectively $y=x$) 
and replace each occurrence of $X$ by $Y$ (resp. of $y$ by $x$). Therefore the global configuration $\Gamma$ may be identified with the vectors of local configurations 
$(\Gamma_1,\ldots,\Gamma_p)$.

Each production can then be locally applied: we write $\lsta{\Gamma_{\ell}}{P/X}{M}{\Gamma'_{\ell}}$  
when application of production $P$ at node $X$ results in a new configuration $\Gamma'_{\ell}$ 
and the sending of a set of messages $M$. 

More formally, $\lsta{\Gamma_{\ell}}{P/X}{M}{\Gamma'_{\ell}}$  when $X=s(t_1,\ldots,t_n)\langle y_1,\ldots, y_m\rangle\in \Gamma_{\ell}$ 
and $P=F\leftarrow F_1\cdots F_k$ is a production whose left-hand side matches with $X$ and 
\[
\begin{array}{l}
  \begin{array}{rcl}
   \Gamma'_{\ell} & = & \set{X=P(X_1,\ldots,X_k)}\qquad \mbox{where}\;\;X_1,\ldots,X_k\;\;\mbox{are new names in $ns(S_{\ell})$}\\
           & \cup & \setof{X_i=F_i\overline{\sigma}}{X_i::s_i\;\;\mbox{and}\;\; s_i\in S_{\ell}} \\
           & \cup & \setof{X'=F\overline{\sigma}}{(X'=F)\in\Gamma_{\ell}\;\wedge\; X'\neq X}\\
           & \cup & \setof{y'=y_j\sigma}{(y'=y_j)\in \Gamma_{\ell}\;\;\mbox{and $y_j\sigma$ is a variable}}
  \end{array}
\\
  \begin{array}{rcl}
   M & = & \setof{X_i=F_i\sigma}{X_i::s_i\;\;\mbox{and}\;\; s_i\not\in S_{\ell}} \\
           & \cup & \setof{y'=y_j\sigma}{(y'=y_j)\in \Gamma_{\ell}\;\;\mbox{and $y_j\sigma$ not a variable}}\\
           & \cup & M_\sigma
  \end{array}
\end{array}
\]
where $\sigma=\mathbf{match}(F,X)$, and $\overline{\sigma}$ is the relativization of $\sigma$ to location $\ell$, it generates a set of messages $M_\sigma$.
This relation means that applying production $P$ at $X$ in site $S_{\ell}$ generates messages $M$ send to distant sites. 
The reception of a message may generate new messages and is described by relation $\lsta{\Gamma_{\ell}}{m}{M}{\Gamma'_\ell}$ where 
\vspace{-\medskipamount}
\begin{enumerate}
 \item If $m=\{ X=s(t_1,\ldots,t_n)\langle y_1,\ldots,y_q\rangle\}$ with $X\in ns(S_{\ell'})$, $s\in S_{\ell}$ with $\ell'\neq \ell$ then 
 $\Gamma'_{\ell}=\Gamma_{\ell}\cup\set{\overline{X}=s(\overline{t}_1,\ldots, \overline{t}_n)\langle \overline{y}_1,\ldots,\overline{y}_q\rangle} 
 \cup\setof{y_j=\overline{y_j}}{1\leq j\leq q}$ where $\overline{X}$, the variables $\overline{x}$ for $x\in var(t_i)$ and the variables $\overline{y}_j$ 
 are new names in $ns(S_{\ell})$ and $\overline{t}=t[\overline{x}/x]$, and $M=\setof{\overline{x}=x}{x\in var(t_i)\;\; 1\leq i\leq n}\cup\set{X=\overline{X}}$.

 \item If $m=\{x=t\}$ with $x\in ns(S_{\ell})$ then $\Gamma'_{\ell}=\Gamma_{\ell}[x=t[\overline{y}/y]]$ where $\overline{y}$ are new names in $ns(S_{\ell})$ 
 associated with the variables $y$ in $t$ and $M=\setof{\overline{y}=y}{y\in var(t)}$.
 \item If $m=(X=Y)$ with $X\in ns(S_{\ell})$ then $\Gamma'_{\ell}=\Gamma_{\ell}\cup\set{X=Y}$ and $M=\emptyset$.
 \item If $m=(y=x)$ with $x\in ns(S_{\ell})$ then $\Gamma'_{\ell}=\Gamma_{\ell}\cup\set{y=x}$ and $M=\emptyset$.
\end{enumerate}

The global dynamics of the system can then be derived as follows, where $e$ stands for $P/X$ or a message \vspace{-2mm}$m$:
\begin{enumerate}
 \item If $\lsta{\Gamma_{\ell}}{e}{M}{\Gamma'_{\ell}}$ then $\gsta{\Gamma}{e}{M}{\Gamma'}$ with $\Gamma_{\ell'}=\Gamma'_{\ell'}$ for $\ell'\neq\ell$.
 \item If $\gsta{\Gamma}{e}{M}{\Gamma'}$ and $\gsta{\Gamma'}{m}{M'}{\Gamma''}$ for $m\in M$ then $\gsta{\Gamma}{e}{M\setminus\set{m}\cup M'}{\Gamma''}$
\end{enumerate}

Input-enabled GAGs possess useful properties with respect to distribution, 
namely messages consumptions and application of productions commute, as shown in the following proposition:

\begin{proposition}
\label{prop:distribution}
For an input-enabled guarded attribute \vspace{-1mm}grammar: 
\begin{enumerate}
 \item If $\gsta{\Gamma}{P/X}{M}{\Gamma'}$ then there exists a substitution $\sigma_{M}$ such that $\gsta{\Gamma}{P/X}{\emptyset}{\Gamma'\sigma_M}$. 
 \item $\netfir{\Gamma}{P/X}{\Gamma'}$ if and only if $\gsta{\Gamma}{P/X}{\emptyset}{\Gamma'}$
 \item Let $\gsta{\Gamma}{P_1/X_1}{M_1}{\Gamma_1}$ and $\gsta{\Gamma}{P_2/X_2}{M_2}{\Gamma_2}$ with $X_1\neq X_2$. 
 One can assume w.l.o.g that  $M_1$ and $M_2$ have no common variables (the name generator chooses different names for the new variables 
 in both cases). Then the diagram below , where $\rightsquigarrow$ denotes messages consumption commutes. 
\end{enumerate}

\end{proposition}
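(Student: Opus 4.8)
The plan is to reduce everything to the centralized firing rule of Definition~\ref{def:firingRuleAG} by showing that message consumption is nothing but the incremental propagation of a substitution, and that the gluing operator $\oplus$ commutes with the operations involved. First I would record the structural invariant that at every reachable state the global configuration is recovered as $\Gamma=\Gamma_1\oplus\cdots\oplus\Gamma_p$, and that the messages in transit are precisely the copy rules ($\overline{x}=x$, $X=\overline{X}$), the open-node installations ($X_i=F_i\sigma$), and the value-carrying equations ($y'=t$) that $\oplus$ would erase, reinstall, or apply. The key technical lemma is that consuming a single message preserves the glued global configuration up to the induced substitution: consuming a subscription $\overline{x}=x$ or a node reference $X=\overline{X}$ (reception rules~3--4) only introduces a copy rule, which $\oplus$ erases; consuming an open-node message (rule~1) installs $F_i\sigma$ at its home site together with fresh subscriptions, which after gluing reproduces the centralized equation $X_i=F_i\sigma$; and consuming a value message $\{x=t\}$ (rule~2) applies the substitution $x:=t$ to the receiving local configuration and re-subscribes only the variables of $t$. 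Since each value message substitutes a term for a variable and the configuration is finite, an induction on the number of pending value messages shows the consumption of $M$ terminates and its net effect on $\Gamma'$ is a well-defined substitution $\sigma_M$; this is Part~1.

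For Part~2 I would unfold the local application $\lsta{\Gamma_\ell}{P/X}{M}{\Gamma'_\ell}$ and compare it with Definition~\ref{def:firingRuleAG}. The local step computes the same $\sigma=\mathbf{match}(F,X)$ but applies only its relativization $\overline{\sigma}$ on site $\ell$, packaging the remaining information as the messages $M$. By the lemma above, $\gsta{\Gamma}{P/X}{\emptyset}{\Gamma'\sigma_M}$, and after gluing the composite of $\overline{\sigma}$ followed by $\sigma_M$ reconstitutes the full substitution $\sigma$; hence the distributed result with all messages consumed equals the centralized $\Gamma'$ of $\netfir{\Gamma}{P/X}{\Gamma'}$. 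The converse is symmetric, since the same $\mathbf{match}$ is defined in both semantics, so enabledness of $P$ at $X$ is equivalent. Here input-enabledness is exactly what guarantees that the asynchronously delivered refinements of inherited attributes never turn a locally enabled production into a disabled one; Proposition~\ref{prop:monotony} is the invariant that keeps the two semantics in step.

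For Part~3 I would invoke the centralized confluence of Corollary~\ref{cor:confluence}. By Part~2 the two distributed applications coincide with the centralized firings, which commute because $X_1\neq X_2$. It then remains to check that leaving the messages pending does not break the commutation: since $M_1$ and $M_2$ are generated with disjoint sets of fresh variables, the reception rules acting on $M_1$ touch only names in the namespaces refined by the first application and dually for $M_2$, so the two message-consumption relations $\rightsquigarrow$ form an independent diamond. Composing the production-level square (confluence) with the message-level diamond (disjointness) yields the commuting diagram.

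The main obstacle I expect is the bookkeeping in the key lemma underlying Parts~1 and~2: proving that message consumption terminates and is confluent, and that its net effect is exactly $\sigma_M$ once the copy rules are eliminated by $\oplus$. The delicate points are that receiving a value message dynamically re-subscribes the variables occurring in the transmitted term, so that subscriptions are created and destroyed along the way, and that relativization introduces fresh names that must be shown irrelevant after gluing; carrying an induction through this while maintaining the invariant $\Gamma=\bigoplus_\ell\Gamma_\ell$ is where the real work lies.
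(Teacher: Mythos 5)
Your proposal is correct and follows essentially the same route as the paper's own proof: a case analysis on the three kinds of messages showing consumption terminates and has a well-defined net substitution $\sigma_M$ (Part~1), the identification of the local relativized substitution composed with $\sigma_M$ with the centralized $\mathbf{match}$ substitution (Part~2), and the combination of monotony/confluence with the disjointness of the fresh variables in $M_1$ and $M_2$ to make the diagram commute (Part~3). The only cosmetic difference is that you invoke Corollary~\ref{cor:confluence} for the central square where the paper re-derives it via Proposition~\ref{prop:monotony} applied to the substitutions $\sigma_{M_1}$ and $\sigma_{M_2}$, but since that corollary is itself an immediate consequence of monotony, the underlying argument is the same.
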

\newcommand{\testlabel}{{\begin{array}{ll}
                &\Gamma_3(\sigma_{M_1}\cup\sigma_{M_2}) \\
                =&\Gamma_2'\sigma_{M_1} = \Gamma_1'\sigma_{M_2}\\
                \end{array}}
                }
\vspace{-2\bigskipamount}
\begin{displaymath}
    \xymatrix{ & & \Gamma \ar[dl]^{M_1}_{P_1/X_1}\ar[dr]^{P_2/X_2}_{M_2} &  & \\
               \Gamma_1\sigma_{M_1} \ar[d]^{P_2/X_2}_{M_2} & \Gamma_1 \ar@{~>}[l]\ar[dr]^{P_2/X_2}_{M_2} &  & \Gamma_2\ar@{~>}[r]\ar[dl]^{M_1}_{P_1/X_1} & \Gamma_2\sigma_{M_2}\ar[d]^{P_1/X_1}_{M_1}\\
               \Gamma_1' \ar@{~>}[drr] &  & \Gamma_3\ar@{~>}[d] & & \Gamma_2'\ar@{~>}[dll]\\
                &   & \testlabel&  & }
\end{displaymath}

Intuitively, proposition~\ref{prop:distribution} and in particular (3) mean that distribution does not affect the global behavious of an input-enabled GAG. 
\begin{proof}[Proof of Proposition~\ref{prop:distribution}]\mbox{}\\
We first prove $(1)$: whenever $\gsta{\Gamma}{P/X}{M}{\Gamma'}$ then there exists a substitution $\sigma_{M}$ such that $\gsta{\Gamma}{P/X}{\emptyset}{\Gamma'\sigma_M}$.
Let us assume that $\gsta{\Gamma}{P/X}{M}{\Gamma'}$, and examine how consuming messages in $M=\{m_1,\dots,m_q\}$ affects $\Gamma'$. Messages can be of several kinds : 
\begin{itemize} 
\item if $m_i=(y=x)$ ( or $m_i=(X=Y)$ ) then consuming the message results in adding an equation $\sigma_{m_i}=\{y=x\}$ (resp. $\sigma_{m_i}=\{X=Y\}$) to the local configuration that receives this message, and generates no new message. 
\item if $m_i= \{ x=t\}$, then consumption of the message results in production of new variables, and a new (finite) set of messages $M_i$ that are all of the form $\{\bar{y}=y\}$ and can then be consumed without producing new messages by the location that has subscribed to this value. We can denote by $\sigma_i$ the substitution that replaces every $x$ in the local configuration that receives $m_i$.
\item if $m_i$ is of the form $\{X=s(t_1,\dots, t_n)<y_1,\dots,y_n>\}$, then consuming $m_i$ results in adding new equations to the local configuration that receives it, and generating a set of messages $M_i=\{m_{i,1}, m_{i,q}\}$, that are of the form $\{\bar{y}=y\}$ and $\{X=\bar{X}\}$ and can hence be consumed by the location that will receive them without generating new messages.
\end{itemize}
These observations show that, after application of a production, message consumption  is a finite process. 
We have $\gsta{\Gamma}{P/X}{M}{\Gamma'}\gsta{}{m_1}{M_1}{\Gamma_1}\gsta{}{M_1}{\emptyset}{\Gamma_1'}\dots 
\gsta{}{m_{|M|}}{M_{|M|}}{\Gamma_{|M|}}\gsta{}{M_{|M|}}{\emptyset}{\Gamma_{|M|}'}$ 

Now, the difference between each $\Gamma_i$ and $\Gamma'_i$ is a set of subscriptions, 
that are appended to some local configurations, and erased during the step. 
Therefore, the global configurations $\Gamma_i$ and $\Gamma'_i$ are identical. 
Similarly, we have $\Gamma_1 = \Gamma' \sigma_{m_1}$, and $\Gamma_i = \Gamma_{i-1}' \sigma_{m_i}$ for every $i\in [2,|M|]$.
Hence, the substitution $\sigma_{M}=\sigma_{m_1}\ldots\sigma_{|M|}$ is such that $\Gamma'\sigma_M = \Gamma_{|M|}'$. 
Hence $\gsta{\Gamma}{P/X}{\emptyset}{\Gamma'\sigma_M}$.

We now give a proof for $(2)$. We have to establish the following equivalence:  
$\netfir{\Gamma}{P/X}{\Gamma'}$ if and only if $\gsta{\Gamma}{P/X}{\emptyset}{\Gamma'}$. 
First, whenever $\gsta{\Gamma}{P/X}{\emptyset}{\Gamma'}$, then, by definition, $\netfir{\Gamma}{P/X}{\Gamma'}$. 
Conversely, $\netfir{\Gamma}{P/X}{\Gamma'}$ implies the existence of $\ell$ and $M$ such that $\lsta{\Gamma_{\ell}}{P/X}{M}{\Gamma^1_{\ell}}$. 
Thus, by definition $\gsta{\Gamma}{P/X}{M}{\Gamma^1}$ (with $\Gamma^1_{\ell'}=\Gamma_{\ell'}$ for $\ell' \not= \ell$). 
Hence, by $(1)$ we have $\gsta{\Gamma}{P/X}{\emptyset}{\Gamma^1\sigma_{M}}$. 
Note that productions applications are deterministic, and messages consumption too. 
Hence, it suffices to prove $\Gamma':= \Gamma^1\sigma_{M}$ to obtain the desired result.
In fact the nodes replacement performed to obtain $\Gamma'$ and $\Gamma^1$ are identical, 
since the same production is applied at the same node.
Let us denote by $\Gamma_[P/X]$ the configuration obtained by replacement of $X$. 
We have to show the equality $\Gamma'=\Gamma_[P/X]\sigma = \Gamma_[P/X] \sigma_l \sigma_M$, 
where $\sigma$ is the usual substitution applied during production application, $\sigma_l$ 
is the substitution resulting from applying production locally to $\Gamma_l$, and $\sigma_M$ 
is the substitution obtained by consumption of messages in $M$. Now, one can notice that all 
substitutions in $\sigma_M$ replace a variable $y$ by a term $t$ whenever $y$ is a subscription 
to some value produced in $\Gamma_l$. The effect is exactly the same as applying $\sigma$ at 
nodes that differ from $X$ in $\Gamma$. As additional subscription generated by messages 
consumption is not considered in the product, we have $\Gamma'=\Gamma_[P/X]\sigma = \Gamma_[P/X] \sigma_l \sigma_M$. 

The last statement, $(3)$, expresses the commutativity of the following diagram:
\begin{displaymath}
    \xymatrix{ & & \Gamma \ar[dl]^{M_1}_{P_1/X_1}\ar[dr]^{P_2/X_2}_{M_2} &  & \\
               \Gamma_1\sigma_{M_1} \ar[d]^{P_2/X_2}_{M_2} & \Gamma_1 \ar@{~>}[l]\ar[dr]^{P_2/X_2}_{M_2} &  
               & \Gamma_2\ar@{~>}[r]\ar[dl]^{M_1}_{P_1/X_1} & \Gamma_2\sigma_{M_2}\ar[d]^{P_1/X_1}_{M_1}\\
               \Gamma_1' \ar@{~>}[drr] &  & \Gamma_3\ar@{~>}[d] & & \Gamma_2'\ar@{~>}[dll]\\
                &   & \testlabel&  & }
\end{displaymath}

We first consider the commutativity of the center:
\begin{center}
$\gsta{\Gamma}{P_1/X_1}{M_1}{\Gamma_1}\gsta{}{P_2/X_2}{M_2}{\Gamma_3}$ commutes into 
$\gsta{\Gamma}{P_2/X_2}{M_2}{\Gamma_2}\gsta{}{P_1/X_1}{M_1}{\Gamma_3}$. 
\end{center}

This follows directly from the properties of input-enabled grammars: $\Gamma_3$ is simply $\Gamma$ 
where both open nodes $X_1$ and $X_2$ have been replaced respectively by the closed nodes 
$X_1=P_1(Y^1_1, \ldots, Y^1_m)$ and $X_2=P_2(Y^2_1, \ldots, Y^2_k)$ (since $M_1$ and $M_2$ 
have no common variables they are unaffected by each other).
 
Let us consider the left hand-side of the diagram.
From $(1)$, we have that $\xymatrix{\Gamma \ar[r]_{M_1}^{P_1/X_1} &\Gamma_1 \ar@{~>}[r]&\Gamma_1\sigma_{M_1}}$. 
And by (2), this implies that 
\netfir{\Gamma}{P_1/X_1}{\Gamma_1\sigma_{M_1}}. Using  Proposition~\ref{prop:monotony}, and the fact that 
\netfir{\Gamma_1}{P_2/X_2}{\Gamma_1}, whe have that $P_2$ is triggered and enabled in $\Gamma_1\sigma_{M_1}$. 
Hence, $\gsta{\Gamma_1\sigma_{M_1}}{P_2/X_2}{M_2}{\Gamma'_1}$. Using again 
Proposition~\ref{prop:monotony}, configuration $\Gamma'_1$ is simply $\Gamma_3\sigma_{M_1}$. 

Furthermore, from (1), we have that:
$\xymatrix{\Gamma_1\sigma_{M_1} \ar[r]_{M_2}^{P_2/X_2} &\Gamma'_1 \ar@{~>}[r]&\Gamma'_1\sigma_{M_2}}$, since 
$\Gamma'_1=\Gamma_3\sigma_{M_1}$, we have:  $\Gamma'_1\sigma_{M_2} = \Gamma_3\sigma_{M_1}\sigma_{M_2} $. 
By a symmetric argument, we obtain: $\Gamma'_2\sigma_{M_1} = \Gamma_3\sigma_{M_2}\sigma_{M_1} $.
Since these substitutions have disjoint support, we have 
$\sigma_{M_1}\sigma_{M_2} = \sigma_{M_2}\sigma_{M_1} =\sigma_{M_1}\cup\sigma_{M_2}$. Thus, we have: 
$\Gamma_3\sigma_{M_1}\sigma_{M_2} = \Gamma_3\sigma_{M_2}\sigma_{M_1} = \Gamma_3\sigma_{M_1}\cup\sigma_{M_2}$.

Finally $\xymatrix{\Gamma_3 \ar@{~>}[r]&\Gamma_3 (\sigma_{M_1}\cup\sigma_{M_2})}$ 
follows from the definitions of $\sigma_{M_1}$ and $\sigma_{M_2}$.

\PropBox{prop:distribution}
\end{proof}

\section{Soundness}\label{sec:soundness}
A specification is sound if every case can reach completion  no matter how its execution started. 
\begin{definition}\label{def:soundness}
Let a guarded attribute grammar be given with its interface (Definition~\ref{def:interface}). 
A \textbf{case} is an instanciation $c=s(t_1\sigma,\ldots,t_n\sigma)\langle x_1,\ldots,x_m\rangle$ of a service $s(t_1,\ldots,t_n)\langle x_1,\ldots,x_m\rangle$ 
where $\sigma$ is a substitution such that $\mathrm{var}(t_i)\sigma\subseteq\set{x_1,\ldots,x_m}$. Stated otherwise a case is, but for the variables with a synthesized value, 
a closed instance of a service. It means that it is a service call which already contains all the information coming from the environment of the guarded attribute grammar. 
A configuration is \textbf{closed} if it contains only closed nodes. 
A guarded attribute grammar is \textbf{sound} if a closed configuration is accessible from any configuration $\Gamma$ accessible from the 
initial configuration $\Gamma_0(c)=\set{X_0=c}$ associated with a case $c$. 
\ExpleBox{def:soundness}
\end{definition}
Let $\gamma$ denote the set of configurations accessible from the initial configuration of some case.
We consider the finite sequences $(\Gamma_i)_{0< i\leq n}$ and the infinite sequences $(\Gamma_i)_{0< i<\omega}$ of configurations in $\gamma$ 
such that $\netfir{\Gamma_i}{\,}{\Gamma_{i+1}}$. A finite and maximal sequence is said to be \textbf{terminal}, i.e.,\ a terminal sequence leads to a configuration 
that enables no production. 
Soundness can the be rephrased by the two following conditions. 
\begin{enumerate}
 \item Every terminal sequence leads to a closed configuration. 
 \item Every configuration on an infinite sequence also belongs to some terminal sequence. 
\end{enumerate}

Soundness can unfortunately be proved indecidable by a simple encoding of Minsky machines.

\begin{proposition}\label{prop:Minsky}
Soundness of guarded attribute grammar is undecidable.
\end{proposition}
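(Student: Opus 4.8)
The plan is to reduce the halting problem for two-counter (Minsky) machines, which is undecidable, to soundness, by effectively associating with every Minsky machine $M$ a finite guarded attribute grammar $G_M$ whose unique computation faithfully simulates $M$, in such a way that $G_M$ is sound in the sense of Definition~\ref{def:soundness} if and only if $M$ halts on empty counters.

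First I would fix the encoding. A counter holding $n\in\mathbb{N}$ is represented by the ground term $s^n(z)$ over a zero constant $z$ and a unary successor $s$. For each instruction label $\ell$ of $M$ I introduce a sort $q_\ell$ carrying two inherited attributes (the two counter contents) and no synthesized attribute; a machine configuration is then a single open node $q_\ell(c_1,c_2)\langle\,\rangle$. The productions of $G_M$ copy the instructions of $M$, for instance:
\[
\begin{array}{r@{\;:\;}l}
\text{inc } c_1 & q_\ell(c_1,c_2)\langle\,\rangle \leftarrow q_{\ell'}(s(c_1),c_2)\langle\,\rangle \\
\text{test } c_1=0 & q_\ell(z,c_2)\langle\,\rangle \leftarrow q_{\ell_0}(z,c_2)\langle\,\rangle \\
\text{dec } c_1 & q_\ell(s(c_1),c_2)\langle\,\rangle \leftarrow q_{\ell_1}(c_1,c_2)\langle\,\rangle \\
\text{halt} & q_{\ell_h}(c_1,c_2)\langle\,\rangle \leftarrow
\end{array}
\]
with symmetric productions for $c_2$. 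Every such production is well-formed per Definition~\ref{def:GAG}: the only patterns sit in the inherited positions of the left-hand side, there are no synthesized attributes to instantiate, and each variable has a single input occurrence. The interface (Definition~\ref{def:interface}) consists of the single service $q_{\ell_\mathrm{init}}(z,z)\langle\,\rangle$, whose invocation yields the case $c=q_{\ell_\mathrm{init}}(z,z)\langle\,\rangle$ modelling $M$ started on empty counters.

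Next I would establish the simulation invariant by induction along the firing relation: every configuration accessible from $\Gamma_0(c)$ contains exactly one open node, of the form $q_\ell(s^{m}(z),s^{n}(z))\langle\,\rangle$, recording the current state and (ground) counter values of $M$. The test-and-decrement patterns $z$ and $s(c_1)$ are mutually exclusive and jointly exhaustive on ground counters, so at each step exactly one production is \emph{triggered}; and since no sort has a synthesized attribute the substitution $\sigma_{\mathit{out}}$ is always the empty, hence acyclic, one, so every triggered production is \emph{enabled}. Consequently $G_M$ has a single maximal run, in lockstep with the run of $M$. If $M$ halts, this run is finite and its last step applies the halt production, turning the last open node into a closed node with no successors, so it ends in a closed configuration reachable from every configuration along the way; hence $G_M$ is sound. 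If $M$ diverges, the run is infinite and no accessible configuration is closed or can reach a closed one, so $G_M$ is not sound. Thus $G_M$ is sound iff $M$ halts, and since $M\mapsto G_M$ is computable, a decision procedure for soundness would decide halting.

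The encoding itself is straightforward, as the statement hints, so the only real work — and the one point to treat with care — is the simulation invariant above: I must check that counters stay ground (so guards resolve deterministically), that exactly one open node persists throughout (so the run is a single linear chain with no spurious branching or concurrency), and that the absence of synthesized attributes rules out any occur-check failure that could stall the simulation. These give a tight one-to-one correspondence between runs of $M$ and runs of $G_M$, after which the two reformulated soundness conditions following Definition~\ref{def:soundness} translate directly into halting versus divergence.
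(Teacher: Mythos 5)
Your proposal is correct and follows essentially the same route as the paper: both encode a two-counter Minsky machine by assigning one sort per instruction label, representing counters as ground terms over zero/successor in inherited attributes (with no synthesized attributes), splitting the test-and-decrement instruction into two pattern-guarded productions, and observing that the resulting grammar has a unique maximal run mirroring the machine, so that soundness holds iff the machine halts. Your write-up is in fact more detailed than the paper's, which states the lockstep simulation and the equivalence in a single sentence, whereas you explicitly verify the groundness invariant, the exclusivity/exhaustiveness of the guards, and that the empty $\sigma_{\mathit{out}}$ makes every triggered production enabled.
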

\begin{proof}
We consider the following presentation of the Minsky machines.
We have two registers $r_1$ and $r_2$ holding integer values. 
Integers are encoded with the constant \textbf{zero} and the unary operator \textbf{succ}.
The machine is given by a finite list of instructions $\mathit{instr}_i$ for $i=1,\ldots,N$ 
of one of the three following forms
\begin{enumerate}
 \item \textbf{INC(r,i):} increment register $r$ and go to instruction $i$. 
 \item \textbf{JZDEC(r,i,j):} if the value of register $r$ is 0 then go to instruction $i$ else decrement the value of the register and go to $j$.
 \item \textbf{HALT:} terminate.
\end{enumerate}
We associate such a machine with a guarded attribute grammar whose sorts corresponds bijectively to the lines of the program,
(i.e., $S=\set{s_1,\ldots, s_N}$) with the following encoding of the program instructions by productions:
\begin{enumerate}
 \item If $\mathit{instr}_k = \mathrm{INC}(r_1,i)$ then add production 
 $$
  \mathrm{Inc}(k,1,i): s_k(x,y) \leftarrow s_{i}(\mathbf{succ}(x),y)
 $$
 \item If $\mathit{instr}_k = \mathrm{INC}(r_2,i)$ then add production 
 $$
  \mathrm{Inc}(k,2,i): s_k(x,y) \leftarrow s_{i}(x,\mathbf{succ}(y))
 $$
 \item If $\mathit{instr}_k = \mathrm{JZDEC}(r_1,i,j)$ then add the productions 
 \[
 \begin{array}{l@{\;:\;}l}
  \mathrm{Jz}(k,1,i) & s_k(\mathbf{zero},y) \leftarrow s_i(\mathbf{zero},y)\\
  \mathrm{Dec}(k,1,j) & s_k(\mathbf{succ}(x),y) \leftarrow s_j(x,y)
 \end{array}
 \]
 \item If $\mathit{instr}_k = \mathrm{JZDEC}(r_2,i,j)$ then add the productions 
 \[
 \begin{array}{l@{\;:\;}l}
  \mathrm{Jz}(k,2,i) & s_k(x,\mathbf{zero}) \leftarrow s_i(x,\mathbf{zero})\\
  \mathrm{Dec}(k,2,j) & s_k(x,\mathbf{succ}(y)) \leftarrow s_j(x,y)
 \end{array}
 \]
 \item If $\mathit{instr}_k = \mathrm{HALT}$ then add production 
 $$
  \mathrm{Halt}(k): s_k(x,y) \leftarrow 
 $$
\end{enumerate}
Since there is a unique maximal firing sequence from the initial configuration 
$\Gamma_0=\set{X_0=s_1(\mathbf{zero},\mathbf{zero})}$ 
the corresponding guarded attribute grammar is sound if and only if the computation 
of the corresponding Minsky machine terminates. 
\PropBox{prop:Minsky}
\end{proof}

\section{Conclusion}\label{sec:conclusion}
To conclude we assess our model and highlight some research directions. 
\subsection{Assessment of the model}
In a nutshell {\em the model of GAGs provides a modular, declarative, user-centric, data-driven, distributed and reconfigurable model of case management}.  
\paragraph{Concurrency.} 
The lifecycle of a business artifact is implicitly represented by the grammar productions. 
A production decomposes a task into new subtasks and specifies constraints between their attributes in the form of the so-called semantic rules.  
The subtasks may then evolve independently as long as the semantic rules are satisfied. 
The order of execution, which may depend on value that are computed during process execution, need not (and cannot in general) be determined statically. 
For that reason, GAGs allow as much concurrency as needed.
In comparison, models in which the lifecycle of artifacts are represented by finite automata constrain concurrency among tasks in an artificial way. 

\paragraph{Modularity.} The GAG approach also facilitates a modular description of business processes. 
For instance when a referee has accepted to produce a report, one need not care about the subprocess dedicated to the actual production of the report.
In the example depicted on Table~\ref{Table:editorialProcess}, 
 making a review report was modeled by a single production, 
but one can imagine similar situations where computing some synthesized information is given by a large 
set of rules used to collect and assemble information arising from various sources.  
However, following a top-down approach, one simply introduces an attribute in which this report should eventually be synthesized and 
delegate the actual production of the expected outcome to an additional set of rules. 
The identification of the different roles involved in the business process also contributes to enhance modularity. 
Finally, some techniques borrowed from attribute grammars, like descriptional composition \cite{Ganzinger83,GanzingerG84},  
decomposition by aspects \cite{WykMBK02,Wyk07} or higher-order attribute grammars \cite{VogtSK89}, 
may also contribute to better modular designs.
  
\paragraph{Reconfiguration.}
The workflow can be reconfigured at run time: New business rules (productions of the grammar) can be added to the system without 
disturbing the current cases. 
By contrast, run time reconfiguration of workflows modeled by Petri nets (or similar models) is a complex issue~\cite{MichelisE96,EllisK00}.
One can also add ``macro productions'' corresponding to specific compositions of productions.
For instance if the Editor-in-chief wants to handle the evaluation of a paper, he can decide to act as an associate editor and as a referee for this particular submission.
However, this means forwarding the corresponding case to himself as an associate editor and then asking himself as a referee if he is willing to write a report. 
A more direct way to model this decision is to  encapsulate these steps in a compound macro production that bypasses the intermediate communications.   
More generally compound rules can be introduced for handling unusual behaviors that deviates from the nominal workflow. 

\paragraph{Logged information.} When a case is terminated, the corresponding artifact collects all relevant information of its history. 
Nodes are labeled by instances of the productions that have lead to the completion of the case. 
Henthforth, they record the decisions (the choices among the allowed productions) together with information associated with these decisions. 
In the case of the editorial process, a terminated case contains the names of the referees, the evaluation reports, the editorial decision, etc. 
A terminated case is a tree whose branches reflect causal dependencies among subactivities used to solve a case, 
while abstracting from concurrent subactivities. 
The temporary information stored  by the attributes attached to open nodes no longer exist when the case has reached completion. 
Closing nodes eliminates temporary information without resorting to any complex mechanism of distributed garbage collection.
The artifacts can be collected in a log which may be used for the purpose of process mining~\cite{Aalst11} either for process discovery 
(by inferring a GAG from a set of artifacts using common patterns in their tree structure) or for conformance checking 
(by inspection of the logs produced during simulations of a model or executions of an actual implementation). 

\paragraph{Distribution.} Guarded attributed grammars can easily be implemented on a distributed architecture without complex communication mechanisms (shared memory, FIFO channels,...). 
Stakeholders in a business process own open nodes, and communicate asynchronously with other stakeholders via messages. 
Moreover there are no edition conflicts since each part of an artifact is edited by the unique owner of the corresponding node. 

\subsection{Further works.}
We plan to design prototypes to analyze and implement a GAG description together with the required support tools 
(editor, parser, checker, simulators ...) and to concentrate on the following research directions:

\paragraph{Applicability.} 
We intend to develop some representative case studies to check  applicability and limitations of the model.
The first case study is a (simplified) distributed distance learning system, for which a GAGs implementation may have several advantages w.r.t. traditional solutions. 
First, since it does not rely on a client/server architecture, it should behave better in a degraded environment (when Internet connection is not always available).  
Most of the activity of the stakeholders are offline and the communication between them takes place upon avaibility of Internet connection. 
Second, the declarative decomposition of learning activities which does not impose a particular execution order together with the modularity of the model should provide more flexibility in the description of the learning processes.  
The second case study is a reporting system (e.g. semi-automatic synthesis of dashboards). 
The grammar can reflect the structure of the report, the identification of the stakeholders and their respective contributions lead to a distributed version of the grammar.
Finally the semantic rules implement the automatic assembly of the reports. 
We obtain a ``write things once'' principle: once data are collected in some synthesized position they can be used wherever they are needed.
GAGs should hence reduce the workload of stakeholders: as the largest part of activity reports collect information that is already available somewhere, and can be extracted automatically by a GAG. 

\paragraph{Structuring the workspace of a stakeholder: Active Workspaces.}
One can rely on the technique presented in  \cite{SwierstraAS98} 
to extract a domain specific language from the subgrammar associated with a stakeholder according to the role(s) he plays in the system. 
The grammar may then contribute to structure the workspace of a stakeholder:   
Procedural parts of the artifacts given by the semantic rules encode and encapsulate technical know-hows that the end user may safely 
ignore. Each stakeholder manipulates documents in a familiar syntax using notations adapted to his domain of expertise (the DSL derived 
from the semantic rules). This can be important for enabling end-users with low level computer literacy to take part in the business process. 

\paragraph{Soundness.} Soundness is a crucial issue of case management systems: it guarantees that a case has a way to reach completion 
from any accessible configuration. Unsurprisingly soundness of GAGs in undecidable. We are looking for recursive subclasses of 
GAGs with decidable soundness.

\bibliographystyle{plain}
\bibliography{biblio}
\end{document}